\documentclass[aps,reprint,groupedaddress,superscriptaddress,longbibliography]{revtex4-1}

\usepackage{qcircuit}
\usepackage{enumitem}
\usepackage{amsmath}
\usepackage{amssymb}
\usepackage{amsthm}
\usepackage{amsfonts}
\usepackage{braket}
\usepackage{dsfont}
\usepackage{textcomp,gensymb}
\usepackage{bm}

\usepackage{color}
\usepackage[dvipsnames]{xcolor}
\usepackage{tikz}
\usepackage{booktabs}
\usepackage{multirow}
\usepackage{nicefrac}

\usepackage{graphicx, import}
\usepackage{soul}
\usepackage[utf8]{inputenc}
\usepackage[export]{adjustbox}
\usepackage[caption=false]{subfig}
\usepackage{parskip} 
\usepackage{hyperref}
\usepackage{url}

\newtheorem{definition}{Definition}
\newtheorem{theorem}{Theorem}
\newtheorem{corollary}{Corollary}[theorem]
\newtheorem{lemma}[theorem]{Lemma}

\newcommand{\schroedinger}{Schr\"{o}dinger }

\renewcommand{\H}{\hat{H}}
\renewcommand{\P}{\hat{P}}
\newcommand{\x}{\hat{x}}
\newcommand{\p}{\hat{p}}
\newcommand{\q}{\hat{q}}
\newcommand{\bfx}{\mathbf{\x}}
\newcommand{\y}{\hat{y}}
\newcommand{\z}{\hat{z}}
\newcommand{\bfp}{\mathbf{\p}}
\newcommand{\bfR}{\mathbf{R}}
\renewcommand{\a}{\hat{a}}
\newcommand{\aD}{\hat{a}^\dagger}
\newcommand{\hil}{\mathcal{H}}
\newcommand{\I}{\mathds{1}}
\newcommand{\Itwo}{I\textsubscript{2} }
\newcommand{\Itwonospace}{I\textsubscript{2}}

\newcommand{\termplus}{\frac{\sqrt{3}+1}{4\sqrt{2}}}
\newcommand{\termminus}{\frac{\sqrt{3}-1}{4\sqrt{2}}}

\DeclareMathOperator*{\cover}{Cover}
\DeclareMathOperator*{\var}{Var}
\DeclareMathOperator*{\cov}{Cov}
\DeclareMathOperator*{\bin}{bin}
\DeclareMathOperator*{\partition}{Part}
\DeclareMathOperator*{\tr}{Tr}

\newcommand{\Cvv}{C_{\parallel,\parallel}}
\newcommand{\Cvh}{C_{\parallel, \perp}}
\newcommand{\Chh}{C_{\perp, \perp}}

\def\ketbra#1#2{{\vert#1\rangle\!\langle#2\vert}}

\begin{document}
\title{Coarse-grained intermolecular interactions on quantum processors}

\author{Lewis~W.~Anderson} 
\email{lewis.anderson@physics.ox.ac.uk}
\affiliation{Clarendon Laboratory, University of Oxford, Parks Road, Oxford OX1 3PU, UK}

\author{Martin~Kiffner} 
\affiliation{Centre for Quantum Technologies, National University of Singapore, 3 Science Drive 2, Singapore 117543}
\affiliation{Clarendon Laboratory, University of Oxford, Parks Road, Oxford OX1 3PU, UK}

\author{Panagiotis~Kl.~Barkoutsos}
\affiliation{IBM Quantum, IBM Research Zurich, Säumerstrasse 4, 8803 Rüschlikon, Switzerland}

\author{Ivano~Tavernelli}
\affiliation{IBM Quantum, IBM Research Zurich, Säumerstrasse 4, 8803 Rüschlikon, Switzerland}

\author{Jason~Crain}
\affiliation{IBM Research Europe, The Hartree Centre STFC Laboratory, Sci-Tech Daresbury, Warrington WA4 4AD, UK}
\affiliation{Department of Biochemistry, University of Oxford, Oxford OX1 3QU, UK}

\author{Dieter~Jaksch}
\affiliation{Institut f\"ur Laserphysik, Universit\"at Hamburg, 22761 Hamburg, Germany}
\affiliation{Clarendon Laboratory, University of Oxford, Parks Road, Oxford OX1 3PU, UK}
\affiliation{Centre for Quantum Technologies, National University of Singapore, 3 Science Drive 2, Singapore 117543}

\date{\today}

\begin{abstract}
Variational quantum algorithms (VQAs) are increasingly being applied in simulations of strongly-bound (covalently bonded) systems using full molecular orbital basis representations. The application of quantum computers to the weakly-bound intermolecular and non-covalently bonded regime however has remained largely unexplored. In this work, we develop a coarse-grained representation of the electronic response that is ideally suited for determining the ground state of weakly interacting molecules using a VQA. We require qubit numbers that grow linearly with the number of molecules and derive scaling behaviour for the number of circuits and measurements required, which compare favourably to traditional variational quantum eigensolver methods. We demonstrate our method on IBM superconducting quantum processors and show its capability to resolve the dispersion energy as a function of separation for a pair of non-polar molecules - thereby establishing a means by which quantum computers can model Van der Waals interactions directly from zero-point quantum fluctuations. Within this coarse-grained approximation, we conclude that current-generation quantum hardware is capable of probing energies in this weakly bound but nevertheless chemically ubiquitous and biologically important regime. Finally, we perform experiments on simulated and real quantum computers for systems of three, four and five oscillators as well as oscillators with anharmonic onsite binding potentials; the consequences of the latter are unexamined in large systems using classical computational methods but can be incorporated here with low computational overhead.
\end{abstract}

\maketitle

% INTRODUCTION

\section{Introduction}

Computer simulation of matter at the atomic and molecular scale has advanced to the point where it is now playing an increasingly significant role in the design and discovery of new functional materials.  However, these simulations all face the common challenge of how to strike inevitable compromises between the physical realism of an underlying model and the practicality of its implementation across the deep hierarchy of intermolecular forces responsible for cohesion in condensed phases.

This hierarchy spans a continuum of energy scales ranging from greater than 1~eV for \textit{intra-}molecular bonds, extending down to $10^{-1}-10^{-2}$~eV~ for the inter-molecular regime~\cite{saito2020chemical}.  The weakest of these are the dispersion interactions arising from instantaneous quantum fluctuations of the electron distribution. These give rise to correlated induced multipole moments with the leading-order contribution coming from dipolar coupling -- attenuated by the familiar  $1/R^6$ asymptotic decay with distance. Dispersion interactions therefore form part of the non-local correlation energy of the system and are not pair-wise additive. Although generally weaker than electrostatic interactions or hydrogen bonds, their  long-ranged character and ubiquity means  that they exert influence to some extent in all non-covalent associations. 

There is now clear evidence that dispersion effects extend beyond cohesive energy and have impact on structural and mechanical properties~\cite{reilly2015van, shtogun2010many} as well as barrier heights and phase transitions~\cite{pastorczak2017intricacies}. They can therefore play governing roles in a wide range of condensed matter phenomena including liquid state physics, molecular crystal stability, low-dimensional and layered system stacking, hybrid organic–inorganic interface properties and biophysical interactions such as drug-ligand binding.  Moreover, as these interactions scale with system size, their importance is expected to grow for larger molecules and supramolecular assemblies~\cite{hermann2017first}. 

Principally, there are two distinct strategies employed to capture this molecular interaction hierarchy in computer models: Firstly, there are empirical potential or force-field methods. Here the system energy is typically decomposed into  bonded  (stretches, bends and torsions) and non-bonded (electrostatic and dispersive) terms. These are parameterised to fit certain experimental reference quantities. The motion of the system can then  be evolved by classical mechanics (molecular dynamics), and trajectories averaged to obtain static properties. These methods have the advantage of extreme computational efficiency meaning that they can be applied to relatively large systems. However they generally lack explicit electronic responses and, consequently, transferability to thermodynamic or environmental conditions outside those of the parameterisation range is limited. 

Secondly, there are so-called first principles approaches in which the Schrodinger equation is solved to various levels of approximation. Here the prevailing methods for molecular dynamics applications are based on Kohn-Sham density functional theory. In these models, there are no empirical force laws: Instead, forces are computed “on-the-fly” directly from the electronic structure according to the Hellmann-Feynman~\cite{hellman1937einfuhrung, feynman1939forces} theorem with trajectories evolved on the Born-Oppenheimer surface~\cite{born1927quantentheorie}.  These techniques are computationally resource-intensive and tend to be limited to relatively small systems. Also, common density functional approximations do not properly capture non-local correlations involved in dispersion forces and there are active efforts to devise dispersion-corrected density functionals~\cite{french2010long,von2004optimization,  tkatchenko2009accurate, grimme2011density}. High-level quantum chemistry methods (such as configuration interaction (CI)~\cite{slater1929theory, boys1950electronic, szalay2012multiconfiguration} or canonical coupled cluster (CC)~\cite{vcivzek1966correlation, bartlett2007coupled} with perturbative - single, double, triple - excitations) can account for dispersion interactions explicitly within various levels of approximation albeit with even more severe limitations imposed by system size scaling~\cite{pieniazek2007benchmark, oliveira2017systematic, bistoni2020finding}.

Quantum computing offers promising routes for simulations of correlated electronic systems at the first-principles level. Materials modelling at atomic and molecular scales is therefore one of its most anticipated application areas. One such route is to use variational quantum algorithms (VQAs)~\cite{preskill2018quantum,cerezo2021variational} for solving the \schroedinger equation for chemical systems~\cite{mcardle2020quantum}. VQAs have the benefits of requiring relatively short coherence times and modest numbers of qubits which is achieved by using a shallow parameterised quantum circuit embedded within a classical optimisation routine to find the ground state of a given problem Hamiltonian~\cite{peruzzo2014variational}. Since the first proposal to use variational methods namely, the variational quantum eigensolver (VQE) for chemical systems~\cite{peruzzo2014variational}, VQAs have seen application to a significant number of areas including combinatorial optimisation~\cite{farhi2014quantum,farhi2016quantum}, strongly correlated materials~\cite{bauer2016hybrid,kreula2016few,kreula2016non} and non-linear partial differential equations~\cite{lubasch2020variational}. Current capabilities to model molecular properties from first principles in large systems using VQAs are, however, limited by the prohibitive computational cost of the high-level, wavefunction methods typically deployed in quantum chemistry. For example, solution of the Schrödinger equation at the full configuration interaction (so-called “full CI”) level for the ground-state wave function for molecular systems exhibits factorial scaling~\cite{ollitrault2020quantum}.

Chemical applications of VQAs have consequently been confined to the strongly-bound intramolecular regime of the interaction hierarchy and to small systems with examples now demonstrated for dissociation curves of small molecules~\cite{grimsley2019adaptive, kandala2017hardware}, ground state energies for intramolecular bonds~\cite{nam2020ground}, molecular excitation energies~\cite{ollitrault2020quantum}, covalent binding energies of hydrogen chains, isomerisation barriers~\cite{google2020hartree}, molecular vibrations~\cite{mcardle2019digital, ollitrault2020hardware} and, two-site DMFT calculations~\cite{rungger2019dynamical, jaderberg2020minimum}. All except the last two examples consider Hamiltonians and states within full electronic basis sets such as Slater-type or Gaussian-type orbitals~\cite{helgaker2014molecular}. For the sake of comparison with our method, we will refer to these as orbital-based VQE methods.

\begin{figure}
\centering
\includegraphics{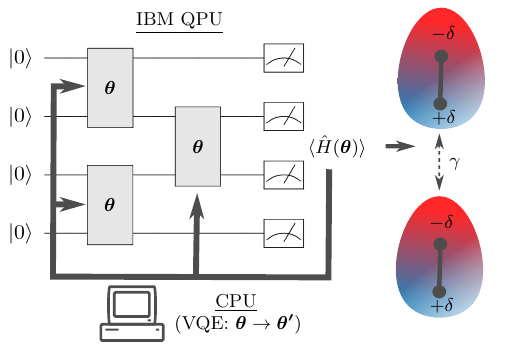}
\caption{Schematic of the VQE optimisation procedure and output states showing instantaneous asymmetry of the molecular charge distribution. These correlated charge density fluctuations lead to dispersion interaction which, to the lowest order, is the familiar $1/R^6$ London attraction.}
\label{fig: abstract}
\end{figure}

In contrast, the intermolecular (non-covalent) interaction regime has been largely unexplored using quantum algorithms and is a gap which will need to be filled in order to address realistic material science problems using near-term, noisy, quantum computers. 
Due to the size of the molecular complexes of interest and the necessary inclusion of core electrons for dispersive forces, the required number of electrons and orbitals implies a large number of basis functions and qubits~\cite{sherrill1996computational,sherrill2017wavefunction}.
This makes the direct scale up of current quantum algorithms for electronic structure calculations very impractical and therefore strategies are needed to reduce the number of electrons and orbitals treated explicitly at the highest levels of accuracy.

To address this difficulty, we develop a coarse-grained model of electronic interactions that is ideally suited for VQA and can easily be extended to represent a wide range of possible interactions. Our approach is inspired by and extends a maximally coarse-grained model used in classical computations~\cite{cipcigan2019electronic}. In this model the responsive (polarisable) part of the molecular charge distribution is modelled as a quantum harmonic oscillator embedded in the molecular frame and its properties can be tuned to reproduce reference values of polarisability or dispersion coefficients for real molecular species. Zero-point multipolar fluctuations are present by construction and so dispersion arises naturally. While the model parameters are defined empirically to fit molecular properties there are no interaction potentials or force laws between atoms or molecules defined a priori. These are instead determined from the coarse-grained electronic structure as in ab-initio approaches.

This model blends therefore features from both empirical potential and first principles methodologies and has already been solved, with model limitations, using a path integral approach on classical processors. In particular, it has been successfully applied to noble gas liquids and solids (pure dispersion) and to water (where both polarisation and dispersion are present) across its phase diagram where it has correctly predicted vaporisation enthalpy, temperature of maximum density, temperature dependences of the dielectric constant and surface tension~\cite{cipcigan2019electronic,cipcigan2018structure,sokhan2015signature} and revealed molecular signatures of Widom line crossing in the supercritical fluid~\cite{sokhan2015molecular}. 
These classical methods have however been developed with the specific aim of deriving force laws and are restricted to harmonic confining potentials where certain order interactions are missing and others are related by simple algebraic relationships that hold only approximately for real systems~\cite{cipcigan2016electronic}. 

By using quantum processors to model a coarse-grained representation, our method gives a more general framework than classical approaches. This allows for the extension to anharmonic oscillators, which would capture these missing interactions as well as allowing for the introduction of more complex physics through non-linear interaction terms. Fig.~\ref{fig: abstract} shows a schematic representation of the algorithm and correlated charge fluctuations captured by the model.

Here we summarize our main findings:
Firstly, the number of qubits required to simulate the model on a quantum computer grows linearly with the number of molecules. Secondly, the number of unique circuits required to measure our Hamiltonian with all-to-all dipolar interactions scales at most linearly with the number of oscillators. This compares favourably to the $O(N^3)$ unique circuits that must be measured for a full orbital-based VQE Hamiltonian with $N$ orbitals~\cite{gokhale2020n}. For our algorithm, relaxation of the harmonic approximation incurs negligible experimental overhead. Classical methods, such as path integral and Monte Carlo techniques, typically rely on the efficient sampling~\cite{westbroek2018user} of two-point correlators for Gaussian states generated by harmonic Hamiltonians. However, the non-Gaussian ground states of anharmonic Hamiltonians are much more costly to sample. Thus, anharmonic potentials have not been studied much using classical computational approaches.

\begin{figure}
\includegraphics{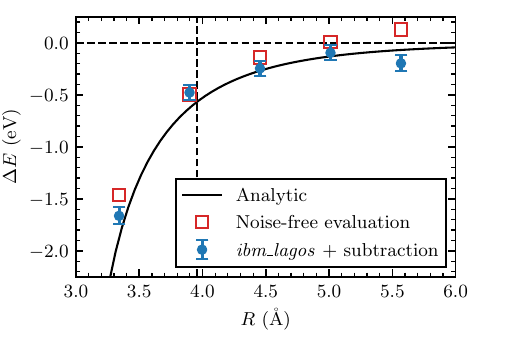}
\caption{London dispersion energy ($\Delta E$) calculated using variational quantum algorithm for one-dimensional quantum Drude oscillators computing interacting I\textsubscript{2} dimers. The black line gives the dispersion energy obtained analytically. Red squares correspond to noise-free evaluation (via state vector simulations) of the experimentally optimized coupled ground state.
Blue circles correspond to energy directly obtained from experimentally optimized ground states (error bars represent one-standard deviation on the mean). The horizontal line marks zero dispersion interaction corresponding to uncoupled oscillators at $R=\infty$ and vertical dashed line corresponds to the I\textsubscript{2} van der Waals diameter.}\label{fig: spectrum}
\end{figure}

We also present proof of principle experiments using IBM superconducting quantum processors demonstrating the solubility of our model on current generation hardware. We focus on two linear non-polar molecules where the electronic responses are described by embedded quantum oscillators. Here the interactions arise from London dispersion forces for which the leading contribution comes from dipolar quantum fluctuations. Looking ahead to the main experimental results of this paper, Fig.~\ref{fig: spectrum} shows the result of our VQA; combined with a simple method to extract the binding energy; for a pair of these interacting oscillators with parameters chosen to represent an Iodine (\Itwonospace) molecular dimer. This demonstration successfully recreates the leading order London dispersion interaction, thereby giving the first demonstration of the use of quantum processors to model dispersion forces directly arising from quantum fluctuations. 

Finally, we extend our proof of principle experiments to systems of three, four and five oscillators using simulated quantum computers. We show how the algorithm performs as a function of ansatz size and demonstrate that it is able to find groundsates of systems containing non-trivial many body interactions. We also perform experiments using an IBM superconducting quantum processor in order for a system including anharmonic onsite binding potentials. We show that the algorithm is again able to capture systems showing strong anharmonic interactions however device noise levels must be further improved in order to accurately measure the anharmonic contribution to the energy of the system, even when using state-of-the-art error mitigation techniques.

Our coarse-grained model allows for the use of quantum computers to efficiently model (with favourable scaling) the inter-molecular interactions between large molecules, which are important in characterising a wide range of chemical and biological systems. This fills an essential gap in the chemical interaction hierarchy and allows us to extend current coarse-grained classical methods to capture a richer set of physical behaviours with the necessary accuracy. 
Given the reduced resource requirements (i.e., number of qubits, gate operations and circuit measurements) when compared to full orbital-based electronic structure calculations, the methodology presented here may be more readily scalable to demonstrating an application of quantum advantage.

The layout of this article is as follows. In Sec.~ \ref{sec: model on qc} we describe the coarse-grained model and how it can be solved using a variational quantum algorithm. This includes: Sec. \ref{subsec: coarse graining} describing the coarse-grained model of interacting oscillators; Sec. \ref{subsec: algorithm rep} giving a description of how the this system is represented on a quantum computer; Sec. \ref{subsec: shots} which makes use of results from graph theory and measure theory to give a scheme for grouping operators needed to measure the Hamiltonian as well as derive upper and lower bounds for the number of shots required to measure the Hamiltonian to a given accuracy; and Sec. \ref{subsec: ansatz} which describes the variational ansatz used in our experiments. In Sec. \ref{sec: ibmq dispersion}, we present proof of principle results modelling the London dispersion interaction of an \Itwo dimer and C technique to subtract the effect of device noise to accurately measure the binding energy. In Sec. \ref{sec: model extensions} we describe how the one-dimensional model can be extended to three dimensions and to include anharmonic and non-linear terms to model richer physics than those accessible to efficient classical methods. In Sec. \ref{sec: further results} we present further experiments on real and simulated quantum processors. This includes Sec. \ref{subsec: anharmonic experiments} for systems of anharmonic oscillators; as well as Sec. \ref{subsec: many oscillators} where we increase the size of the harmonic system to include three, four and five oscillators. In Sec. \ref{sec: discussion} we discuss our results and describe further research directions.

%MODEL

\section{Coarse-grained modelling on digital quantum computers}\label{sec: model on qc}
\subsection{Coarse-grained model}\label{subsec: coarse graining}

In the coarse-grained model, collective electronic responses are described by a charged quantum Drude oscillator (QDO) tethered to an atomic nucleus or embedded in a molecular frame. For neutral species the Drude (quasi-)particle and the nucleus to which it is bound have charges $-q$ and $+q$, respectively. The correlated displacements of these QDOs, which correspond to instantaneous charge density fluctuations in molecules, allows us to capture dispersive interactions between non-polar molecules. It is an electronically coarse-grained model which nonetheless exhibits the full hierarchy of electronic responses at long range: 
Many-body polarisation and dispersion to all orders arise naturally and, in principle, intermolecular forces can then be computed ``on the fly'' from the coarse-grained electron distribution of Coulomb-coupled oscillators without the need for {\it a priori} force laws or empirical interaction potentials. 
One further feature of this model, is that it obeys distinguishable particle statistics -- Fermi statistics are not required. The model must therefore be augmented by a simple short ranged repulsion term. 
The basic method has been demonstrated on classical processors in the case of noble gas solids and single-component liquids~\cite{jones2013electronically, jones2013quantum}. Using only the properties of the isolated molecule in the model parameters, emergent properties of these systems have been predicted across a range of their respective phase diagrams. For a recent review of these methods and results see~\cite{cipcigan2019electronic}.

For simplicity, for most of this work we will focus on the case of one-dimensional QDOs with dipole order coupling, however much of the methodology developed here can be applied straightforwardly to systems of three-dimensional oscillators and we give a brief description of this extension in  Sec.~\ref{subsec: 3d}. Even the simple one-dimensional model recreates the London dispersion interaction proportional to $R^{-6}$ and $\alpha^2$, where $\alpha$ is the dipole polarisability and $R$ the physical separation, up-to a constant factor for spherical non-polar molecules and exactly for a pair of linear non-polar molecules in the limit of infinitely anisotropic polarisability (See Appendix~\ref{app: one-dimensional oscillators} for further details.) 

The non-dimensional Hamiltonian for $N$ one-dimensional QDOs with quadratic coupling is
\begin{equation}\label{eqn: hamiltonian}
\H = \sum_{i=1}^N (\x_i^2 + \p_i^2) + \sum_{j>i} \gamma_{i,j} \x_i\x_j,
\end{equation}
where $\x_i$ and $\p_i$ are the bosonic position and momentum operators for oscillator $i$ and $\gamma_{i,j}\in \mathbb{R}$ is the coupling between oscillators  $i$ and $j$. Throughout this work we will implicitly work in energy units of $\hbar \omega/2$, for simplicity of notation assuming that all oscillators share the same trapping frequency $\omega$. For each pair of identical one-dimensional oscillators aligned parallel to each other and along the inter-oscillator axis, separated by distance $R_{i,j}$, the coupling constant is given by $\gamma_{i,j}=-2(q^2/\mu\omega^2) R_{i,j}^{-3}=-4\alpha R_{i,j}^{-3}$, where the effective charge of the Drude particle $q$, the effective mass $\mu$ and the oscillator frequency $\omega$ are model parameters. The dipole polarisability is given by $\alpha=q^2/\mu\omega^2$. If instead the oscillators are aligned parallel to each other and perpendicular to the inter-oscillator axis, the coupling is given by $\gamma_{i,j}=2\alpha R_{i,j}^{-3}$.

One advantage of this harmonically bound coarse-grained model is that efficient classical methods exist which can be used for benchmarking quantum implementations. However, the price paid for this harmonic approximation is that the relative strengths of interactions within the model are governed by Gaussian statistics meaning that there are simple algebraic relations relating interactions at different orders. 
This leads to the absence of terms, such as the second hyperpolarisability (dipole-dipole-dipole-dipole)~\cite{jones2013quantum}, due to certain symmetries that emerge in the harmonic model.

In what follows we present a variational quantum algorithm for calculating intermolecular interactions between molecules using a one-dimensional QDO model and outline how to extend this to a full three dimensional representation. 
Such a quantum algorithm is not limited to the inclusion of harmonic Hamiltonian terms and can be extended to include anharmonic trapping potentials as well as non-linear interaction terms that may allow for a richer set of physics beyond Gaussian statistics and the study of more realistic chemical systems~\cite{jones2013quantum}.

%ALGORITHM

\subsection{Algorithm description} \label{subsec: algorithm rep}

In order to find the ground state of the Hamiltonians of interest we use the common procedure for a Variational Quantum Eigensolver (VQE) algorithm \cite{peruzzo2014variational}. The main goal of VQE is to produce quantum states $\ket{\psi(\bm{\theta})}$ using a quantum circuit parameterised by real parameters $\bm{\theta} = \set{\theta_0, \theta_1,\dots}$ which can be varied. Throughout the algorithm, the parameters are updated using a classical feedback loop in order to minimise the expected values of the Hamiltonian on the particular quantum state. Guided from the variational principle the optimisation procedure can provide a quantum state that approximates the exact ground state of the system by solving the minimisation problem

\begin{equation}
\min_{\bm{\theta}} \braket{\psi(\bm{\theta})|\hat{H}|\psi(\bm{\theta})},
\end{equation}

provided the variational ansatz is able to represent a state close to the ground state within the required precision. In the following we outline how we encode the states of harmonic QDOs into qubit states and how the Hamiltonian is measured efficiently on a quantum computer before defining our variational ansatz later.

We represent the state of the QDOs in the Fock basis which for Bosonic oscillators form an infinite basis. We are limited to a finite number of qubits and so must restrict ourselves to the space of a finite subset of Fock states. This will naturally lead to solutions that approximately match the true states in full Fock space. Since dispersive interaction energies are expected to be much smaller than the bare energy scale $\hbar\omega$, the groundstates of realistic Hamiltonians will be dominated by the lowest energy Fock states and thus the truncation of the Fock space will not severely impact the accuracy of solutions.
We use a compact encoding~\cite{sawaya2020resource}, which was proposed in earlier work on vibrational modes on a quantum computer \cite{dumitrescu2018cloud,mcardle2019digital}. For this encoding, the first $d$ Fock states are mapped onto $m=\log_2d$ qubits as

\begin{equation}
\hil_{d-\text{Fock}} \rightarrow \hil_2^{\otimes m}: \; \ket{\underline{n}} \rightarrow \ket{\text{bin}_m(n)}
\end{equation}

where $\text{bin}_m(n)$ denotes the $m$-length binary representation of integer $n \in \set{0,1,\dots, d-1}$. Throughout this work,  underlined integers within a $\ket{\cdot}$ will denote a Fock state and all other states will be written in the computational basis of the qubits. The basis for multiple QDOs is then given by the tensor product of the single QDO basis such that for a total of $N$ $2^m$-dimensional QDO we will use $M=mN$ qubits.

Although we are restricted to the lowest $d$ Fock states, we expect that the ground states required will be sufficiently close to the uncoupled ground state $\ket{\underline{0}}^{\otimes N}$ such that a finite number of basis states will give a good approximation to the ground state energy, particularly in the weakly-coupled regime. Since we require logarithmically many qubits, this qubit encoding will be efficient provided the number of Fock states required for a chosen accuracy grows less than double-exponentially.

To measure the cost function for a given variational state we must decompose each term in the Hamiltonian in \eqref{eqn: hamiltonian}, restricted to a finite Fock basis, into tensor product Pauli operators $\hat{P} \in \set{\I, X, Y, Z}^{\otimes M}$ that can be measured on a quantum computer. For the non-interacting and coupling terms we show this for a single QDO and a single pair of QDOs respectively and note that the same decomposition can be used for all terms within the Hamiltonian just acting on different subsets of $m$ and $2m$ qubits. For the non-interacting contribution of a single QDO, there are $m$ non-identity $Z$ operators that must be measured:

\begin{equation} \label{eqn: nonint paulis}
\x^2 + \p^2 = 2^m \I^{\otimes m} - \sum_{i=0}^{m-1} 2^i Z_i,
\end{equation}

where $Z_i$ is the Pauli Z operator acting on qubit $i$. For a pair of $m$ qubit QDOs the coupled part of the Hamiltonian leads to $\left(\frac{1}{2}d\log_2d\right)^2$ Pauli terms. The form of the coupling terms for arbitrary $d$, as well as a proof for the number of terms, is given in Appendix \ref{app: pauli decomp}. As an example, the decompositions for $m=1$ and $m=2$ are given by 
\begin{widetext}
\begin{equation}
\x_1\x_2\rvert_{m=1} = X_1X_2,
\end{equation}
\begin{equation}
\begin{split}
\x_1\x_2 \lvert_{m=2} = & \frac{\sqrt{3}+2}{4}X_1X_3  + \termplus X_1X_2X_3 + \termplus X_1X_3X_4 +\frac{1}{4}X_1X_2X_3X_4 + \termplus Y_1Y_2X_3 +\frac{1}{4}Y_1Y_2X_3X_4 \\
    & + \termplus X_1Y_3Y_4 +\frac{1}{4}X_1X_2Y_3Y_4 - \termminus Y_1Y_2X_3Z_4 -\termminus X_1Z_2Y_3Y_4 - \frac{1}{4}X_1Z_2X_3 \\ &- \termminus X_1Z_2X_3X_4 - \frac{1}{4} X_1X_3Z_4 - \termminus X_1X_2X_3Z_4 + \frac{2-\sqrt{3}}{4}X_1Z_2X_3Z_4 +\frac{1}{4}Y_1Y_2Y_3Y_4.
\end{split}
 \label{eqn: coupling paulis}
\end{equation}
\end{widetext}
The interaction Hamiltonian for $N$ QDOs can then be constructed by combining these pair-wise interactions for all $N(N-1)/2$ pairs of QDOs with the correct coupling constant.

%SHOTS

\subsection{Measurement cost} \label{subsec: shots}

\begin{table*}
\begin{ruledtabular}
\begin{tabular} {@{}ccccc@{}}
\multirow{2}{*}{Measure over states} & \multicolumn{2}{c}{Ungrouped}  & \multicolumn{2}{c}{Grouped}    \\ \cline{2-5} 
\rule{0pt}{2ex} & Circuits & Shots & Circuits & Shots \\ \hline
\rule{0pt}{4ex} Uniform spherical & \multirow{2.7}{*}{$O\left(N^2(\log_2d)^2d^2\right)$ \textsuperscript{a}} & $O\left(\frac{\gamma^2 N^4d^{8}}{\epsilon^2}\right)$ \textsuperscript{b} & \multirow{2.7}{*}{$O(Nd^2)$ \textsuperscript{c}}  & $O\left(\frac{\gamma^2 N^3(\log_2d)^4d^{6}}{\epsilon^2}\right)$ \textsuperscript{d} \\
\rule{0pt}{4ex} Pure uncoupled & & $O\left(\frac{\gamma^2 N^4 d^4}{\epsilon^2}\right)$ \textsuperscript{e} & & $O\left(\frac{\gamma^2 N^3 d^4}{\epsilon^2}\right)$ \textsuperscript{f} \\
\end{tabular}
\end{ruledtabular}
\footnotesize{For full expressions and proofs see Appendix \ref{app: shots}: \textsuperscript{a}Theorem~\ref{thrm: ungrouped circuits}, \textsuperscript{b}Theorem~\ref{thrm: scaling ungrouped shots}, \textsuperscript{c}Theorem~\ref{thrm: scaling}, 
\textsuperscript{d}Theorem~\ref{thrm: scaling shots}, 
\textsuperscript{e}Theorem~\ref{thrm: scaling shots zero ungrouped} and, \textsuperscript{f}Theorem~\ref{thrm:scaling shots zero}.}
\caption{Number of unique circuits required using our exact grouping as well as expected number of shots required to achieve absolute measurement error $\epsilon$ for a system of $N$ $d$-level one-dimensional QDOs. Here $\gamma$ is a typical scale for the inter-oscillator couplings. The expectations for number of shots are calculated with respect to distribution over the uniform spherical measure of states (Uniform spherical) as well as the pure state of the uncoupled QDOs (Pure uncoupled). For ungrouped results all Pauli operators in the Hamiltonian are measured separately, for grouped results Pauli operators are measured in groups according to the method described in Appendix~\ref{app: pauli decomp} Theorem~\ref{thrm: scaling}.}
\label{table: shots}
\end{table*} 

\begin{figure*}
\includegraphics{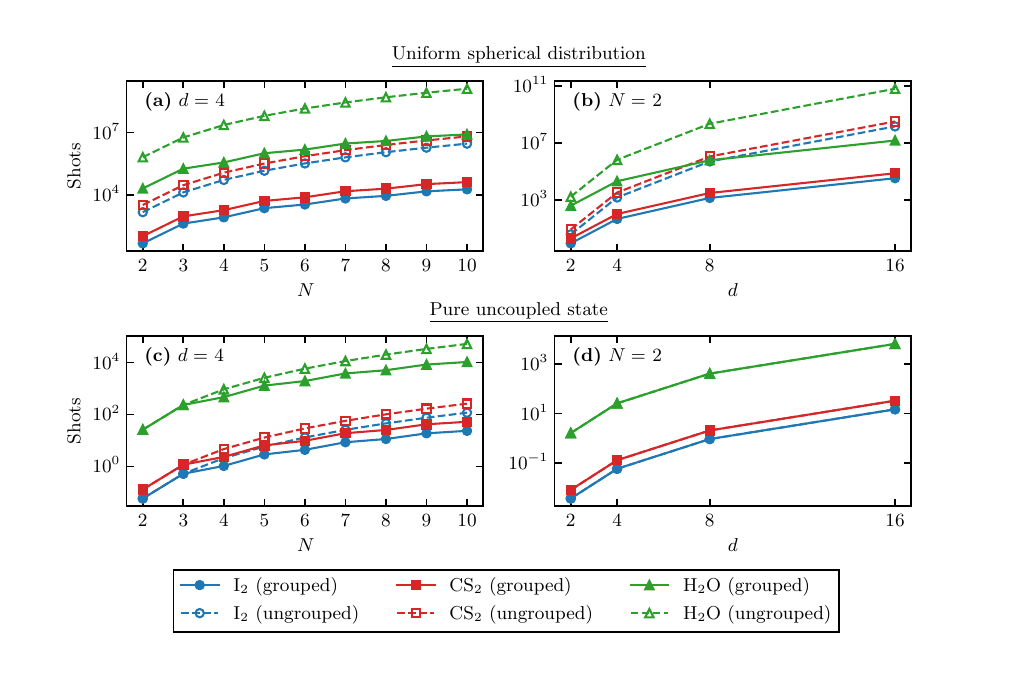}
\caption{Estimated number of shots required to measure $\langle \H(R_{\text{vdW}}) \rangle$ with error $\epsilon=10\%\times\Delta{E}(R_{\text{vdW}})$ for a range of different molecules. Here $R_{\text{vdW}}$ is the van der Waals diameter of the respective molecules. The expected number of shots is calculated with respect to the spherically uniform measure [\textbf{(a)} and \textbf{(b)}] as well as with the uncoupled state [\textbf{(c)} and \textbf{(d)}] with and without grouping commuting operators. The number of shots is shown both as a function $N$ for fixed $d=4$ and as a function of $d$ for fixed $N=2$. Note that in \textbf{(d)}, the grouped and ungrouped lines are the indistinguishable from one another. Values for $R_{\text{vdW}}$ and $\alpha$ are taken from Refs. \cite{maroulis1997electrooptical, yamamoto2006pressure, schropp2008polarizability, huang2013size}.}\label{fig: shots}
\end{figure*}

When combining the single oscillator non-interacting and two-oscillator coupling terms together to measure the total Hamiltonian, if all $\gamma_{i,j}$ are non-zero there will be a total of $(\log_2d)N + \frac{1}{8}d^2(\log_2d)^2N(N-1)$ non-identity Pauli operators. Naively $\H$ can be measured by measuring each of these operators using its own circuit, however the number of measurements required can be reduced significantly by measuring mutually-commuting operators concurrently. Many heuristic methods exist for partitioning the terms into commuting subsets, see for example Refs. \cite{jena2019pauli, crawford2021efficient, verteletskyi2020measurement, izmaylov2019unitary, zhao2020measurement, yen2020measuring,gokhale2020n}. From the structure of our Hamiltonian we can derive an exact method for grouping terms into subsets that qubit-wise commute and therefore obtain an upper bound on the number of unique measurements that need to be performed. We find an overall upper bound of $(N-1)\left(\frac{d}{2}\log_2d\right)^2 + 1$ or $N\left(\frac{d}{2}\log_2d\right)^2 + 1$ circuits, for even and odd $N$ respectively, needed to measure $\H$. In general, for a Hamiltonian with all-to-all $k$\textsuperscript{th} order coupling, there is an upper bound of $O\left((d\log_2d)^kN^{k-1}\right)$ scaling with the number of QDOs. Using heuristic sorting methods the scaling with respect to $d$ can be improved beyond that given by these upper bounds, however an exact scaling is not known. For a description of the grouping method, proof of these scalings as well as examples of the result of using heuristic grouping algorithms, see Appendix~\ref{app: pauli decomp}.

Since we expect that, for physically relevant systems, the ground states will be dominated by low-lying Fock states, we do not expect the need to increase $d$ significantly to model non-trivial systems. Therefore, we are primarily concerned with the scaling with respect to the number of QDOs $N$. Appendix~\ref{app: truncated} shows that this is the case for the pair of interacting QDOs and that $d=4$ is a good approximation to the true ground state until very large and unphysical couplings.

Furthermore, using this known grouping, we can estimate the expected number of shots needed to measure $\langle \H \rangle$ with absolute error $\epsilon$. Without knowing the ground state of an arbitrary Hamiltonian, we calculate the expected number of shots with respect to two different distributions of states in Hilbert space. The first distribution is the uniform spherical measure on the full Hilbert space which assumes no knowledge of the distribution of possible states~\cite[Chapter~7]{watrous2018theory}. The second is the pure state of the uncoupled ground state $\ket{\psi_0} = \ket{0}^{\otimes mN}$ which we expect to be close to the ground state of the system with small non-zero coupling. The scaling of the expected number of shots required for these two distributions is given in Table \ref{table: shots} for measuring all Pauli operators individually or by grouping them as described. We see that in both distributions, grouping the measurements improves the scaling from $O(N^4)$ to $O(N^3)$ with respect to $N$ and significantly reduces the polynomial scaling with respect to $d$ in the case of the uniform spherical measure. In all cases, the exact values as well as proofs for the number of shots required can be found in Appendix~\ref{app: shots} (the number of shots is an upper bound in the case of the uniform spherical measure and tight for the pure uncoupled state).

To demonstrate the number of shots required for modelling real molecules of interest, Fig.~\ref{fig: shots} highlights the number of shots required to achieve a measurement error on $\langle \H \rangle$ of $\epsilon = 10\%\times\Delta E(R_{\text{vdW}})$. Here we consider $\Delta E(R_{\text{vdW}})$ to be the dispersion interaction between a pair of one-dimensional QDOs representing the molecule evaluated at $R_{\text{vdW}}$ the Van der Waals diameter of the molecules. Assuming distributions of quantum states of the uniform spherical measure as well as a pure uncoupled state, we see that the grouping method derived in this work can reduce the required number of shots by multiple orders of magnitude. Further reductions may be possible if heuristic methods are used to find more optimal groups.

For the prototypical orbital-based VQE methods there are in general $O(N^4)$ Pauli operators that must be measured where $N$ is the size of the molecular orbital basis set used. The best known scaling for the number of commuting operator collections is $O(N^3)$~\cite{gokhale2020n} which, when compared to the order $O(N)$ scaling in terms of the number of QDOs for the dipole coupled Hamiltonian here, suggests that the system we consider in this work may be significantly more scalable than those considered in orbital-based VQE methods. To our knowledge, no attempt has been made to find a scaling for the number of shots required for these.

%ANSATZ

\subsection{Variational ansatz} \label{subsec: ansatz}

For the results presented in this paper, we use a variational ansatz consisting of layers of two-qubit operators acting on nearest-neighbour qubits. Since the problem Hamiltonian is positive, the ground state will be real-valued (up to a global phase) and we restrict ourselves to using real-valued, two-qubit operators based on the optimal construction of general $
\text{SO}(4)$ operators found by Vatan and Williams~\cite{vatan2004optimal}. We removed two parameterised $R_z$ gates from their construction to reduce the number of variational parameters. This naturally restricts the available operations implemented by each block to a subset of $\text{SO}(4)$ but was found to still give reasonable results in simulation and experiment while reducing the number of circuits required for gradient measurements. The circuit used for two QDOs is shown in Fig.~\ref{fig: circuit}a which has a total of twelve variational parameters, the two-qubit circuit blocks are given in Fig.~\ref{fig: circuit}b. 

\begin{figure}
% \begin{flushleft}
% \textbf{(a)}\\
% \end{flushleft}
% \centering
% \mbox{
%  \Qcircuit @C=1em @R=1.5em {
%  &&\lstick{\ket{0}} & \qw & \multigate{1}{\hspace{1em}} & \qw & \qw & \qw & \meter\\
%  &&\lstick{\ket{0}} & \qw & \ghost{\hspace{1em}} & \qw & \multigate{1}{\hspace{1em}} & \qw & \meter\\
%  &&\lstick{\ket{0}} & \qw & \multigate{1}{\hspace{1em}} & \qw & \ghost{\hspace{1em}} & \qw & \meter\\
%  &&\lstick{\ket{0}} & \qw & \ghost{\hspace{1em}} & \qw & \qw & \qw & \meter
%  \inputgroupv{1}{2}{1.5em}{1.2em}{\text{QDO }1\hspace{2.5em}}
%  \inputgroupv{3}{4}{1.5em}{1.2em}{\text{QDO 2}\hspace{2.5em}}
%  }
%  }
% \begin{flushleft}
% \textbf{(b)}\\
% \end{flushleft}
% \mbox{
% \Qcircuit @C=1.0em @R=1.5em {
% & \gate{S} & \qw & \ctrl{1} & \gate{R_y\left(\theta_1\right)} & \gate{R_z\left(\theta_2\right)} & \ctrl{1} \qw & \qw & \gate{S^\dagger} & \qw & \\
% & \gate{S} & \gate{R} & \targ & \gate{R_y\left(\theta_3\right)} & \gate{R_z\left(\theta_4\right)} & \targ & \gate{R^\dagger} & \gate{S^\dagger} & \qw &
% }
% }
\includegraphics{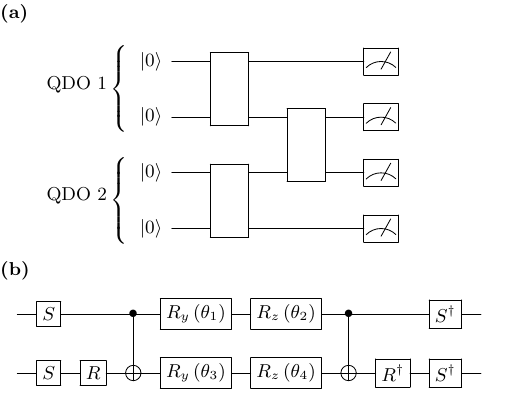}
\caption{Circuits used in experiments. \textbf{(a)} Four qubit variational ansatz representing the state of two $d=4$ QDOs. The upper (lower) qubit within the two-qubit register for each QDO corresponds to the most (least) significant bit in the binary encoding of the corresponding QDO Fock states. Each two-qubit block consists of a four-parameter real-valued operator.
\textbf{(b)} Composition of the four-parameter real-valued operators. $R_z$ and $R_y$ gates in the centre are Pauli Z and Y rotation gates, each one contains a single variational parameter. The remaining gates are fixed with $S=R_z\left(\frac{\pi}{2}\right)$ and $R=R_y\left(\frac{\pi}{2}\right)$. This block is based on the general SO(4) operator from Ref.~\cite{vatan2004optimal} with two parameterised $R_z$ gates removed.}
\label{fig: circuit}
\end{figure}

Although more costly than ans\"atze designed to make efficient use of the quantum hardware~\cite{kandala2017hardware}, quantum advantage may be achieved for ab-initio VQE by using chemistry-inspired ans\"atze such as unitary coupled cluster (UCC)~\cite{romero2018strategies}. Analogous ans\"atze for bosonic systems can be constructed~\cite{mcardle2019digital,ollitrault2020hardware,majland2021resource}, namely the Unitary-vibrational-coupled-cluster (UVCC) ansatz (following the naming convention of McArdle et al.~\cite{mcardle2019digital}). This is a unitary extension to the vibrational coupled cluster (VCC) used in classical computations~\cite{christiansen2004second,christiansen2004vibrational} and, unlike the VCC ansatz, the UVCC ansatz obeys the variational principle and may deal better with strong static correlations~\cite{mcardle2019digital}.

The UVCC ansatz will lead to significantly larger circuits than our ansatz in Fig.~\ref{fig: circuit} due to the non-trivial decomposition of bosonic operators into Pauli operators that can be implemented natively on quantum devices, thus we do not consider its application in this work. The analysis of commutation relations presented in Appendix~\ref{app: pauli decomp} as well as various encoding methods~\cite{majland2021resource} will be important in understanding the experimental cost of the Trotterisation required for such an ansatz.

%RESULTS

\section{Experimental results using IBM quantum computers}\label{sec: ibmq dispersion}

\begin{figure*}
\includegraphics{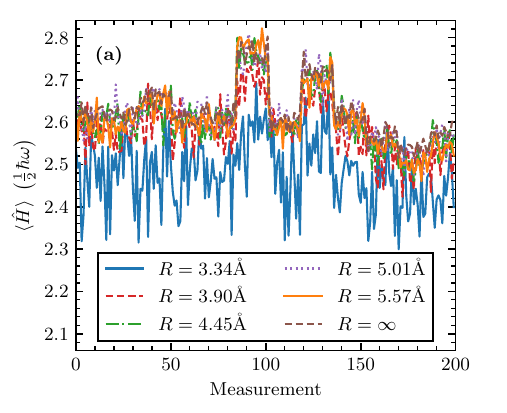}
\includegraphics{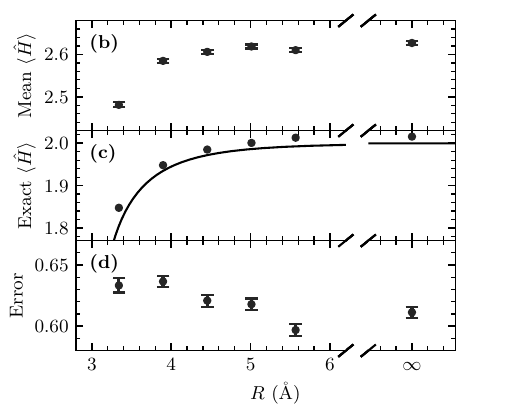}
\caption{Experimental measurements used for error subtraction. \textbf{(a)} Measurements of $\langle \H(R)\rangle_{\rho(\bm{\theta}_R)}$ where $\bm{\theta}_R$ are the optimal parameters found for each separation $R$ evaluated on \textit{ibm\_lagos} quantum computer. As described in the main text, each of the 200 experiments is performed for each value of $R$ before moving onto the next one. \textbf{(b)} Mean values of the measurements for each $R$. \textbf{(c)} Exact calculation of $\H$ with no noise for the optimised parameters. Solid curve shows the true ground state. \textbf{(d)} Absolute difference between exact value of $\langle \H \rangle$ and mean of noisy measurements. This shows that, although the effect of noise varies with each experiment, by measuring each $R$ in the order described, the effect of noise is approximately equal for each data point. Data points at $R=\infty$ correspond to the uncoupled Hamiltonian and parameters found using the optimisation procedure.
}\label{fig: subtraction}
\end{figure*}

To demonstrate the feasibility of calculating weak intermolecular interactions from the coarse-grained model we represent two linear, non-polar molecules at distance $R$ apart using two one-dimensional QDOs. We choose model parameters matching those of \Itwo, which is known to produce a dimer molecular liquid above its melting point of 114\degree C in which dispersion forces dominate the cohesion.

For each QDO, we use a $d=4$ level Fock basis, leading to a total of four qubits required for the two-molecule system. We model the pair of \Itwo molecules, both aligned with their molecular axis along the intermolecular axis such that the one-dimensional QDO models their induced polarisation along the molecular axis. We use a value for the polarisabilty of $\alpha=\alpha_{zz}=14.5 \text{\AA}^3$ the polarisability coefficient along the molecular axis~\cite{maroulis1997electrooptical} and a harmonic trapping frequency of $\hbar\omega=9.61\text{eV}$. For the coarse-grained electronic model, it is usual to choose the $\omega$ parameter (or alternatively the point charge $q$ and mass $\mu$ if working with raw Drude particle parameters). This is done to capture a certain chosen property, such as the $C_6$ London dispersion coefficient, and then further terms (such as $C_8$ and $C_9$ terms) arise naturally from the model. In this case, we choose $\hbar\omega$ such that the full three-dimensional QDO in this setup would give $C_6$ coefficient matching experimental measurements of \Itwo~\cite{kendrick2012empirical}. For the pair of interacting molecules, the full three-dimensional model and the total London dispersion energy for the pair of interacting molecules can be found from repetitions of the one-dimensional model using the correct possibilities for each remaining axis a discussion of which is deferred until  Sec.~\ref{subsec: 3d}. As proof of principle, we consider polarisability only along the intermolecular axis. To generate the dispersion energy as a function of inter-molecular separation $R$, we perform the VQE procedure to calculate the ground state energy of the coupled QDOs for a range of possible coupling constants.

For the experimental optimisation procedure we used, for $R=\infty$,  the quantum computer \textit{ibmq\_santiago} and, for all other values of $R$, \textit{ibmq\_montreal} (For further details about the hardware used in this work, see Appendix~\ref{app: hardware}). Each evaluation of $\langle\H\rangle$ is performed by running ten circuits with 8192 shots each where the 21 Pauli operators needed to estimate $\langle\H\rangle$ are collected into ten groups of operators that qubit-wise commute as described in Sec.~\ref{subsec: algorithm rep}. We evaluated gradients using the parameter shift rule~\cite{mitarai2018quantum} and performed the optimisation using the ADAM optimiser (learning rate 0.25, decay rates 0.9, 0.99)~\cite{kingma2014adam}. For each value of $R$, the VQE procedure was run for 200 optimisation steps with the final set of parameters taken as the output. For each experiment, the values of $\bm{\theta}$ are independently randomly initialised to values taken uniformly on the interval $[-\pi/2,\pi/2]$.

Once the optimal set of parameters was found, we extracted the binding energy from the noisy quantum computer using a simple scheme to subtract the effect of device noise. The binding energy due to the London dispersion interaction at a certain $R$ is then given by 
\begin{equation}
\Delta E(R) = E(R) - E(R=\infty),
\end{equation} where $E(R)$ is the ground state energy at $R$ and with $E(R=\infty)$ corresponding to the ground state energy of the completely uncoupled state. By modelling the output of the noisy quantum computers as a unitary channel depending on variational parameters $\bm{\theta}$ along with depolarising noise with error rate $\lambda$, the output state for a given set of variational parameters is

\begin{equation}
\rho(\bm{\theta}) = (1-\lambda)\ketbra{\psi(\bm{\theta})}{\psi(\bm{\theta})} + \frac{\lambda}{2^{mN}}\I,
\end{equation}

where $\ket{\psi{(\bm{\theta})}}$ is the output state from a noiseless ansatz with parameters $\bm{\theta}$. Assuming $\ket{\psi{(\bm{\theta})}}$ is the required ground state, this corresponds to a noisy expectation value for the Hamiltonian

\begin{equation}
\langle \hat{H}(R)\rangle_{\rho(\bm{\theta})} = (1-\lambda)E(R) + \frac{\lambda}{2^{mN}} \tr[\hat{H}(R)].
\end{equation}

Therefore, assuming $\lambda$ is independent of the variational parameters and does not change between experiments, the dispersion energy is given by

\begin{equation}\label{eqn: error subtraction}
\Delta E(R) = \frac{\langle \H(R)\rangle_{\rho(\bm{\theta})} - \langle \H(R=\infty)\rangle_{\rho(\bm{\theta_\infty})}}{(1-\lambda)},
\end{equation}

where $\bm{\theta}_\infty$ are variational parameters found for the ground state of $\H(R=\infty)$, here we have used that $\tr[\H(R)] = \tr[\H_\text{non-int}] = dN \; \forall R$. We use \eqref{eqn: error subtraction} to calculate the dispersion energy from noisy measurements, estimating $\epsilon$ by measuring the fidelity of the uncoupled ground state on the noisy computer $\mathcal{F}=\tr\left[\rho(\bm{\theta}_\infty)\ketbra{0}{0}^{\otimes mN}\right] = (1-\lambda)+\frac{\lambda}{2^{mN}}$. This can be done without additional cost by reusing the counts measured when calculating $\langle \H \rangle_{\rho(\bm{\theta}_\infty)}$. We find that, for the devices and circuits we use in this work, this method to subtract the device noise gives accurate values for the dispersion energy.

To measure the dispersion energy for the optimised states we use the \textit{ibm\_lagos} quantum computer. One practical consideration when performing this subtraction, is that due to the significant time dependence of the noise levels of the device, it is important to evaluate the expectations for different values of $R$ in~\eqref{eqn: error subtraction} with the same qubits on the same device as close to simultaneously as possible. For the results presented here we took the mean of 200 repetitions of each measurement of $\langle \H \rangle$ (with 8192 shots per circuit), performing the measurements for each $R$ sequentially before moving onto the next of the 200 repetitions. As many measurements were performed as possible in a single job and all jobs were submitted simultaneously to minimise the time between successive jobs. Fig.~\ref{fig: subtraction} shows how the level of noise varies significantly throughout the 200 evaluations but the drift seen in the expectation values is approximately matched for each of the different values of $R$ when measured in this way.

The spectrum achieved by the optimisation procedure in this proof of principle experiment is shown in Fig.~\ref{fig: spectrum}. From this, we see that by using the variational procedure combined with our ansatz, optimal parameters can be found using noisy gradients calculated on the real device that correspond to energy values close to the true ground state when evaluated without device noise. Furthermore, although the raw experimental energies are much larger than even the uncoupled energy of the system, using the simple subtraction method described above, we are able to successfully measure dispersion energies on the device closely matching the exact London dispersion interaction. From Fig.~\ref{fig: spectrum} we see that for all values of $R$ tested, the binding energy  using the quantum processor is within two standard deviations of the exact value. It is worth noting that the value of the ground state energy with the same variational parameters but evaluated using noiseless simulation (shown in Fig.~\ref{fig: subtraction}(b)) are all slightly above the true value given by the curve. This could be due to either imperfect or unconverged optimisation or systematic errors in the rotation gates applied by the device which are accounted for as over/under rotations in the variational parameters learnt by the algorithm and not reflected in the error-free simulation. We expect that these effects will be reduced by improvements in hardware performance and by using better gradient descent schemes or better fine tuning of the settings for the ADAM optimisation used.

%3D SYSTEMS

\section{Extending the model}\label{sec: model extensions}

\subsection{Three-dimensional QDOs}\label{subsec: 3d}

As described in Sec.~\ref{subsec: coarse graining}, for a complete description of the dipole-dipole interaction and higher order terms, full three-dimensional QDOs are required. Here we demonstrate how our representation and variational algorithm can be extended to three-dimensional QDO systems.

The Hamiltonian for such a system is given by 
\begin{equation}\label{eqn: 3d hamiltonian}
\begin{split}
\H =& \sum_{i=1}^N \left(\frac{1}{2\mu}\bfp_i^2 + \frac{1}{2}\mu\omega^2\bfx_i^2\right)\\
&+ q^2\sum_{j>i}\frac{3(\bfx_i\cdot\bfR_{i,j})(\bfx_j\cdot\bfR_{i,j}) - \bfx_i\cdot\bfx_jR_{i,j}^2}{R_{i,j}^5},
\end{split}
\end{equation} 

where $\bfx_i=(\x_i,\y_i,\z_i)$ and $\bfp_i = (\p_{x_i},\p_{y_i},\p_{z_i})$ are the position and momentum operators for QDO $i$ in each of the three Cartesian coordinates. $\bfR_{i,j}$ is the vector separating the centres of the pairs of QDOs $i$ and $j$ and $\mu$, $\omega$ and $q$ represent the effective mass, natural frequency and charge for the QDO particles which, for simplicity, are assumed to be isotropic and equal for all oscillators.

To represent the state of these three-dimensional oscillators on a quantum computer, we treat each of the three spatial dimensions as a separate harmonic oscillator with position and momentum operators $\set{(\x,\p_x), (\y, \p_y), (\z,\p_z)}$. As in Sec.~\ref{subsec: algorithm rep}, each of these one-dimensional oscillators can be represented using a binary encoding in $m$ qubits. Thus, the state of each three dimensional oscillator state is encoded in a register of $3m$ qubits:
\begin{equation}
\begin{split}
&\mathcal{H}_{d-\text{Fock}}^{\otimes 3} \rightarrow \mathcal{H}_2^{\otimes m} \otimes \mathcal{H}_2^{\otimes m} \otimes \mathcal{H}_2^{\otimes m}:\\&\ket{\underline{n}_x,\underline{n}_y,\underline{n}_z} \rightarrow \ket{\text{bin}_m(n_x),\text{bin}_m(n_y),\text{bin}_m(n_z)}.
\end{split}
\end{equation}

Following Sec. \ref{subsec: algorithm rep}, an upper bound for the number of circuits that are required to measure $H$ is $(N-1)\left(\frac{d}{2}\log_2d\right)^2 + 1$ or $N\left(\frac{d}{2}\log_2d\right)^2 + 1$, for even and odd $N$ respectively. See Appendix \ref{app: pauli decomp} for details.

Extending the two oscillator system to three dimensions, the Hamiltonian simplifies to 
\begin{equation}
\H = \sum_{u\in\set{x,y,z}} (\hat{u}_1^2 + \hat{p}_{u_1}^2) + (\hat{u}_2^2 + \hat{p}_{u_2}^2) + \gamma_{uu}\hat{u}_1\hat{u}_2,
\end{equation}
where $\gamma_{xx} = \nicefrac{2\alpha_{xx}}{R^3}$, $\gamma_{yy} = \nicefrac{2\alpha_{yy}}{R^3}$ and $\gamma_{zz} = \nicefrac{-4\alpha_{zz}}{R^3}$, $\alpha_{xx}$, $\alpha_{yy}$, $\alpha_{zz}$ represent the dipole polarisability along each of the three axes (which we now allow to be anisotropic, corresponding to $q$ being an anisotropic tensor) and the intermolecular axis is in the $z$ direction; along which the physical separation is $R$. We have ignored dimensions to match the structure of \eqref{eqn: hamiltonian}.  We see from this that the Hamiltonian is equivalent to three independent one-dimensional oscillators with the Hamiltonian matching \eqref{eqn: hamiltonian} for $N=2$ with the appropriate choices of coupling constants. Therefore, in the previous section where we modelled the interaction of two linear molecules aligned along an axis, an extension to a full three dimensional system follows trivially.

%INDUCTION ANHARMONIC NONLINEAR

\subsection{Induction, anharmonic and non-linear terms}\label{subsec: anharmonic}

The quadratic Hamiltonians that we have presented so far can easily be extended to include higher-order and anharmonic terms. Here we briefly describe three extensions of the model giving rise to additional interactions and phenomena. We also discuss how the decomposition and grouping strategies described in Sec.~\ref{subsec: shots} can be applied to these richer systems.

The cheapest and simplest modification is the inclusion of induction effects by applying a uniform electric field $E$ to the oscillators. This simple modification can, for two or more oscillators, lead to non-additive effects~\cite{cipcigan2019electronic}. For one-dimensional oscillators this leads to an interaction of the form $\H_\text{E-field} = \sum_{i=1}^N q_iE\x_i$ for $N$ Drude oscillators with charges $q_i$. These can be measured using the same decompositions for $\x$ as were used previously since these interactions are guaranteed to qubit-wise commute with terms of the form $\x \otimes \x$ incurring no additional circuits to measure the total Hamiltonian.

For more realistic induction models, we can instead consider oscillators in a Coulomb potential caused by one or more charges. In this case, the Hamiltonian will have the form $\H_\text{coulomb} = \sum_{i=0}^N\sum_{j=0}^M \phi_j(\x_i)$ for $N$ oscillators in Coulomb potentials represented by $\phi_j$ caused by $M$ fixed charges. To measure a Hamiltonian of this form, each of the Coulomb potentials must be Taylor expanded up to some finite order $w$:
\begin{equation}\label{eqn: coulomb}
    \phi_j(\x_i) \sim \frac{1}{|r_j-\x_i|} \approx \frac{1}{r_j} + \frac{\x_i}{r_j^2} + \frac{\x_i^2}{r_j^3} + \dots + \frac{\x_i^{(w-1)}}{r_j^w},
\end{equation}
where $r_j \gg |\x_i|$ is the position of the point charge relative to the centre of the QDO. Again, since each term of the form $\x_i^w$ acts on distinct oscillators they can all be measured simultaneously. The expected contribution to the number of shots will be $O(\sqrt{N})$ with respect to $N$ thus leaving the asymptotic scaling unchanged when added to the linear dipolar Hamiltonian of \eqref{eqn: hamiltonian}.

Using a similar addition to the Hamiltonian we are able to relax the assumption of a harmonic trapping potential and consider anharmonic trapping potentials that have not been accessible for large systems using classical computation methods~\cite{jones2013quantum}. Additional terms of the form $\H_\text{anharm}=\omega_3\x_i^3+\dots+\omega_w\x_i^w$ with appropriate model parameters defining coefficients $\set{\omega_3,\omega_4, \dots,\omega_w}$ can be used to choose a desired, potentially asymmetric, trapping potential. The measurement cost will scale in the same way as terms in \eqref{eqn: coulomb}. Relaxing the harmonic assumption is required for inclusion of some higher order terms (such as four-dipole hyperpolarisabilty~\cite{jones2013quantum,jones2010quantum}) and leads to non-linear optical effects such as the Kerr effect, second- and third- harmonic generation and, forms of optical mixing~\cite{boyd2020nonlinear}.

Finally, we introduce non-linear interaction terms. As an example we consider interactions of the form $\H_\text{non-lin} = \sum_{i,j,k=0}^N\beta_{\langle i,j,k\rangle}\x_i \x_j \x_k$. Terms similar to this have been considered for use in quantum algorithms for vibrational spectra, noting that these are difficult to deal with classical computational methods~\cite{sawaya2019quantum, mcardle2019digital}. If all $O(N^3)$ possible terms in $\H_\text{non-lin}$ are non-zero, using the exact grouping that we have described will contribute an $O(\frac{d^9N^5}{\epsilon^2})$ cost to the number of shots required. In practice the number of shots may be considerably lower as heuristic sorting methods will make use of additional qubit-wise commutation not considered in the grouping method we have used in this work. For higher order interactions, these commuting pairs will likely be more common. Also, many physical systems may not include all-to-all interactions or will have interactions that decay strongly with distance such that many of the interaction terms can be ignored. Terms of this form, as well as higher order non-linear interaction terms will allow for a large amount of flexibility in the model and will allow for the investigation of a wide range of physical and chemical phenomena.

%ANHARMONIC EXPERIMENTS

\section{Further experimental results}\label{sec: further results}

\subsection{Anharmonic systems}\label{subsec: anharmonic experiments}

To test the performance of the algorithm for a system of anharmonic oscillators, we investigate two systems with 1D anharmonic QDOs on both simulated and real quantum computers.

\begin{figure}
\includegraphics{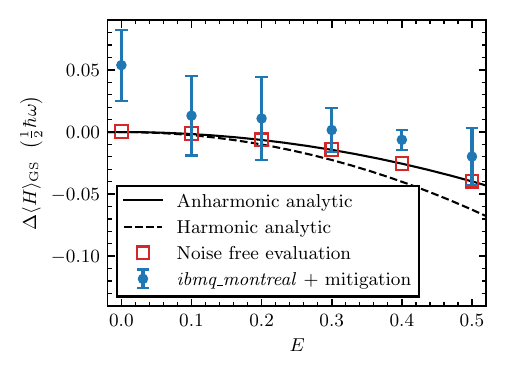}
\caption{Deviation from zero field ground state energy of 1D anharmonic oscillator in external field $E$. Blue circles correspond to evaluating the energy on the real \textit{ibmq\_montreal} quantum processor using error mitigation techniques described in the main text. Red squares correspond to evaluating the final energy with no device or shot noise. Both sets of data points use the same variational parameters which were found through optimisation performed on a simulated version of \textit{ibmq\_montreal}. Black solid line shows the exact spectrum found by numerically diagonalising the Hamiltonian. Black dashed line shows the exact spectrum for the analagous harmonic system (setting $x^4$ term to zero).}
\label{fig: anharmonic with field}
\end{figure}

We first consider the system of a single oscillator in an external potential, modifying the onsite potential to include a quartic term. In this case, the Hamiltonian is given by $\H = \x^2 + \p^2 + \frac{1}{5} \x^4 + E\x$, where parameter $E$ is a uniform external field which is varied. We used two qubits to give a truncated Fock space of $d=4$ and an ansatz consisting of the single four parameter block given in Fig.~\ref{fig: circuit}(b).

Optimisation was performed using a simulated version of the \textit{ibm\_montreal} device and using the same optimiser and settings as in Sec.~\ref{sec: ibmq dispersion} for 200 optimisation steps. As with the experiments in Sec~\ref{subsec: many oscillators}, for the first datapoint at $E=0$, $\bm{\theta}$ were initialised to values taken randomly on the uniform interval $[-\pi/2,\pi/2]$. Proceeding data points were then run in order of increasing $E$ initialising $\bm{\theta}$ to the optimised values linearly extrapolating the optimised parameters from the previous data points with respect to the two corresponding values of $E$.

Once the optimisation was completed, we evaluated the final energy value using the optimised parameters on the real \textit{ibm\_montreal} device to see how real device noise affected the accuracy of the energy measured. Since the system was no longer harmonic, the zero field ground state was not known to be the zero state and so the subtraction method given in Eqn.~\eqref{eqn: error subtraction} could not be used to remove device noise from the measured value. Instead two error mitigation techniques - zero noise extrapolation~\cite{temme2017error, li2017efficient} and one based on the Lanczos method~\cite{suchsland2021algorithmic} were used in combination to measure the final ground state energy. See Appendix~\ref{app: mitigation} for details on these two methods, as well as experimental settings used in this work.

The results of this procedure are shown in Fig.~\ref{fig: anharmonic with field}. We see that values of the ground state energy evaluated without noise agree well with the true value found by directly diagonalising the Hamiltonian matrix. As in previous experiments, this shows that, despite the simulated device noise in the optimisation procedure, a good set of variational parameters are found by the algorithm and that the ansatz is able to accurately generate the required groundstates. We see that, when combined with error mitigation, the interaction energy of the system shows some agreement with the true value, however the uncertainties in the values mean that we are not able to resolve the difference in interaction energies between the harmonic and anharmonic systems. We note that in this system, the energy resolution required is significantly smaller than the dispersion interaction of Sec.~\ref{sec: ibmq dispersion}.

An increase in the measurement uncertainty is a consequence of both error mitigation techniques. This uncertainty can be decreased, as well as the overall reduction in the error, by using a larger number of shots (thereby increasing runtime) or by improving error rates on the quantum computer.

We additionally perform experiments for a system of two linearly coupled oscillators (the system using in Sec.~\ref{sec: ibmq dispersion} with additional  quartic onsite potential terms). For such a system the Hamiltonian is given by $\H = \x_1^2 +\p_1^2 + \frac{1}{5}\x_1^4 + \x_2^2 + \p_2^2 + \frac{1}{5}\x_2^4 + \gamma\x_1\x_2$. We again use two qubits per oscillator and use the ansatz shown in Fig~\ref{fig: circuit}(a) and the same optimisation settings as in Sec~\ref{sec: ibmq dispersion}. Again, values of $\bm{\theta}$ were randomly initialised for $\gamma=0$ data point and initial values of $\bm{\theta}$ for remaining datapoints were chosen by extrapolating from previous optimised values. Optimisation was performed using a simulated version of the \textit{ibmq\_montreal} device and we calculated the final values for the ground state energy found using exact simulation - removing the effect of device or shot noise. The results of these experiments are given in Fig.~\ref{fig: anharmonic coupled}. These show a good agreement between the true ground state energy and results found through the optimisation procedure.

\begin{figure}
\includegraphics{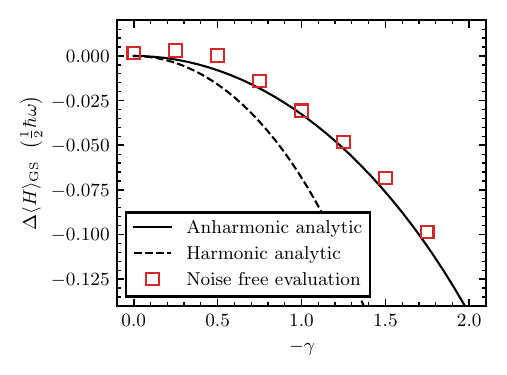}
\caption{Deviation from uncoupled ground state energy for system of two coupled anharmonic oscillators with optimisation performed on simulated \textit{ibmq\_montreal} device. Data points show ground state energy relative to true ground state energy of uncoupled system, evaluated without device or shot noise. Solid line gives exact spectrum of the anharmonic system calculated by direct matrix diagonalisation. Dashed line gives exact spectrum of the corresponding harmonic system.}
\label{fig: anharmonic coupled}
\end{figure}

% MANY BODY SYSTEMS

\subsection{Many oscillators}\label{subsec: many oscillators}

We also test our variational algorithm for more than two-body systems, namely $N=3,4$ and $5$ oscillators. Using a Hamiltonian of the form Eq.~\eqref{eqn: hamiltonian} again, coupling constants $\gamma_{i,j}$ were chosen to represent systems of 1D QDOs arranged in regular $N$ sided polygons and allowed to oscillate perpendicular to the inter-oscillator plane. For these systems, $\gamma_{i,j} = 2\alpha/{R_{i,j}^3} \propto 2\alpha/D^3$ where $R_{i,j}$ are the distances between oscillators $i$ and $j$, $D$ gives the diameter of the circle that effectively encloses the $N$ vertex polygons. By varying the ratio of $2\alpha/D^3$ (effectively varying the size of the polygons) we generate the spectra shown in the results of Fig.~\ref{fig: many}.

For the results shown in Fig.~\ref{fig: many}, we simulate the performance of our VQA using a range of different ans\"atze depths. We use an exact simulation for the optimisation procedure and final energy measurement so there are no contributions from finite number of shots or device noise. As a variational ansatz, we use a similar construction to that of the ansatze used in Sec.~\ref{sec: ibmq dispersion} with the slight modification of using cyclic connectivity such that there are additional $\text{SO}(4)$ circuit blocks between the first and last qubits in even numbered layers. We represent each QDO with two qubits such that the local dimension of the oscillators is $d=4$ and experiments require six, eight and ten qubits for $N=3,4,$ and $5$ respectively. For each data point, 500 optimisation steps and the same optimisation settings as in Sec.~\ref{sec: ibmq dispersion}. For the $\alpha/D^3=0$ datapoints, $\bm{\theta}$ were initialised to values taken uniformly on the interval $[-\pi/2,\pi/2]$. Proceeding data points were then run in order of increasing $\alpha/D^3$ initialising $\bm{\theta}$ to the optimised values for the previous data point (in the case for the second data point) or by linearly extrapolating the optimised parameters from the previous data points with respect to the two corresponding values of $\alpha/D^3$ (for all other datapoints).

From the results in Fig.~\ref{fig: many}, we see that the variational algorithm is able to successfully recreate the energy spectrum of the three, four and five QDO systems. Of particular interest is the spectrum of the three oscillator case in which the three body contribution, similar to the full Axilrod-Teller interaction energy~\cite{axilrod1943interaction}, makes a significant contribution to the dispersion energy and is accurately recreated by the $d=4$ truncated Fock space as well as the variational algorithm In fact, across all three systems, over the coupling parameter values considered, the truncated spectrum corresponding to the truncated Fock space matches well the exact ground state energy calculated through symplectic methods (see Appendix~\ref{app: symplectic} for details on how this was calculated). As the system size increases, a deeper ansatz circuit with more variational parameters is needed to capture reasonable estimates of the groundstate energy.

\begin{figure*}
    \centering
    \includegraphics{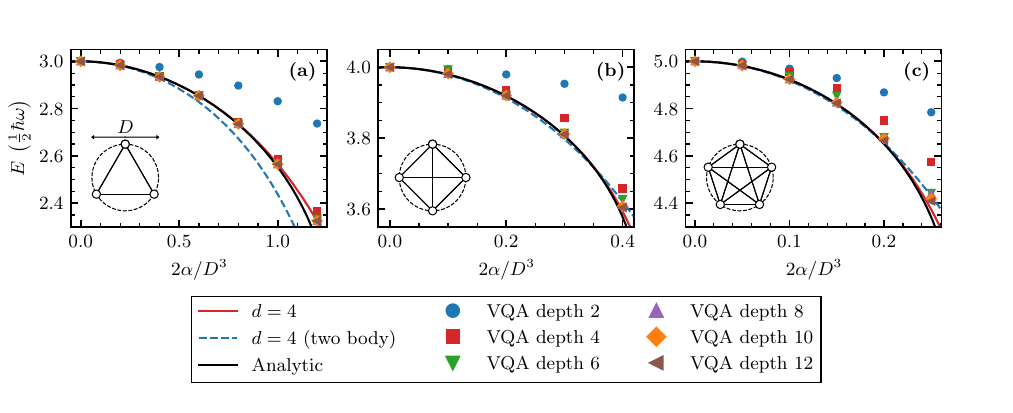}
    \caption{Simulation of VQA for systems of \textbf{(a)}~three, \textbf{(b)}~four and \textbf{(c)}~five QDOs arranged in regular polygons using ansatze of varying depth. Insets show respective oscillator geometries, labelling the diameter $D$ of the circle enclosing the polygons which was effectively varied to generate the spectra. Red solid line (labelled $d=4$) corresponds to numerically diagonalising Hamiltonian of oscillators of $d=4$ energy levels. Blue dashed line corresponds to the same system but only considering the two body pair-wise interactions. Black solid line (labelled analytic) corresponds to using symplectic representation of many oscillator system (as described in Appendix~\ref{app: symplectic}) which effectively corresponds to $d=\infty$. Data points correspond to outputs of VQA algorithm for varying depth ansatze.}
    \label{fig: many}
\end{figure*}

% DISCUSSION

\section{Discussion and conclusions}\label{sec: discussion}

In this work we have developed a variational quantum algorithm for modelling intermolecular interactions using a coarse-grained framework. We have introduced an encoding scheme, variational ansatz and variational algorithm for digital quantum computers assessing measurement cost and scaling. We presented an exact method for grouping mutually commuting operators that can be measured simultaneously which gave bounds on the number of different circuits and number of shots required to measure a system with dipole-dipole interactions. We have shown evidence that this scaling may be more favourable than for existing VQE methods. We have also presented a pathway for adding anharmonic and higher order terms that are not easily accessible to classical computational methods. This, combined with the favourable scaling, suggests that our algorithm may be a more promising candidate for showing quantum advantage than other variational methods.

As with other applications of VQE methods, the global scaling of the algorithm presented - namely how many optimisation steps are required to find the ground state within a given accuracy - is not yet known. That the classical optimisation required for variational quantum algorithms is known to be NP-hard~\cite{bittel2021training} and the occurrence of barren plateaus~\cite{mcclean2018barren,cerezo2021cost,wang2021noise,marrero2021entanglement} - exponentially vanishing gradients that mean ansatze cannot be optimised efficiently - may limit the scalability of the variational approach. As seen in the field of classical machine learning however, worst case complexity theoretic arguments~\cite{blum1988training} do not necessarily preclude the use of such methods for practical applications, particularly as algorithmic improvements can be made to mitigate such issues. There is already a significant body of work on methods to address the issue of barren plateaus. These include investigating ansatz structure~\cite{pesah2021absence,bharti2021iterative, wiersema2020exploring} and parameter initialisation strategies~\cite{grant2019initialization,verdon2019learning,liu2021parameter} amongst others~\cite{anand2021natural,patti2021entanglement,sack2022avoiding}.  We expect that many methods found to avoid barren plateaus for VQAs in a general setting or for specific cost functions will likely be applicable to the work presented here. For the problem presented here, a more physically inspired ansatz (e.g., UVCC described in Sec.~\ref{subsec: ansatz}) may be needed to avoid barren plateaus induced by approximate 2-designs~\cite{mcclean2018barren} within the ansatz that we have used. Additionally, the cost function we use in this work is local and may be more resilient to barren plateaus than other applications of VQAs~\cite{cerezo2021cost,uvarov2021barren}.

We have experimentally demonstrated that van der Waals dispersion energies are resolvable on current generation quantum computing hardware and that they lie within physically realistic ranges found in molecular materials. This represents the first use of existing quantum computing capability to access the weak but influential intermolecular energy scales arising from non-covalent interactions (including correlated quantum fluctuations) relevant in condensed states of molecular assemblies. While the specific model used in this experiment can also be solved on classical processors, the impetus to develop and demonstrate an efficient and well characterised quantum implementation includes the following: 

Firstly, it expands the palette of problems that can can be accessed on quantum computing systems by providing a further level of electronic coarse graining while preserving a complete set of responses, fluctuations and correlation phenomena at long range. This methodology could therefore be combined with other electronic structure strategies allowing only selected electronic subsystems to be treated at the highest level of accuracy. Such combined hybrid approaches are likely to be required to treat complex or multiscale problems even if full CI computations can be realised in polynomial time complexity. Having established that dispersion interactions are detectable, extensions to the current implementation can be included to move up the interaction hierarchy to capture induction and many-body polarisation thereby opening opportunities for versatile molecular simulation problems on quantum architecture. 

Secondly, this framework may also be used to enrich existing simple models such as bead based descriptions of protein and branched polymer folding which can approach classically intractable combinatorial complexity and are already being actively explored for implementation in quantum architecture~\cite{fingerhuth2018quantum, robert2021resource}. Including coarse-grained representations of the long-ranged interactions will result in more realistic treatments of model folding problems.

Finally, we have shown that the aforementioned extension to the model, in which anharmonic on-site binding potentials can be introduced with low experimental overhead, can already be solved within our VQA and how the measured anharmonic energies are affected by device noise on superconducting processors. This paves the way for using quantum algorithms for anharmonic systems not easily accessible to classical algorithms and should lead to a better understanding of how anharmonic effects can be used to extend coarse-grained models for molecules as well as how new physics may arise in anharmonic many-body systems.

%ACKNOWLEDGEMENTS

\begin{acknowledgments}
LWA would like to thank Christopher Willby for helpful discussions about symplectic methods for quadratic systems. LWA, MK, DJ acknowledge support from the EPSRC National Quantum Technology Hub in Networked Quantum Information Technology (EP/M013243/1) and the EPSRC Hub in Quantum Computing and Simulation (EP/T001062/1). MK and DJ acknowledge financial support from the National Research Foundation, Prime Ministers Office, Singapore, and the Ministry of Education, Singapore, under the Research Centres of Excellence program. The authors would like to acknowledge the use of the University of Oxford Advanced Research Computing (ARC) facility in carrying out this work~\cite{richards2015university}. This work was also supported by the Hartree National Centre for Digital Innovation programme, funded by the Department for Business, Energy and Industrial Strategy, United Kingdom.
\end{acknowledgments}

\appendix

%APPENDIX DISPERSION

\section{London dispersion in linear molecules and one-dimensional quantum Drude oscillators} \label{app: one-dimensional oscillators}

This appendix shows how the interaction between two axially symmetric linear molecules can be captured by one-dimensional QDOs. The results and derivations given here can be found in the literature but are combined and included here for the reader.

In the framework developed by London \cite{london1942centers}, the interaction between two anisotropic linear molecules that are axially symmetric about their respective principle intramolecular bond can be described by ellipsoids of charge. For a pair of identical linear molecules that share components of the dipole polarisability $\alpha_\parallel$ $(\alpha_\perp)$ and characteristic angular frequencies $\omega_\parallel$ $(\omega_\perp)$ parallel (perpendicular) to their intramolecular axis, the London Dispersion energy at large separation $R$ is 

\begin{widetext}
\begin{equation}
\begin{split}
    \Delta E = -\frac{1}{R^6}\bigg[&(\Cvv + \Chh - \Cvh) \left\{\sin^2(\theta_a)\sin^2(\theta_b)\cos(\phi_a-\phi_b)-2\cos(\theta_a)\cos(\theta_b)\right\}^2\\
    & + 3(\Cvh - \Chh)\left\{\cos^2(\theta_a) + \cos^2(\theta_b)\right\} + 2(\Cvh + 4 \Chh) \bigg], 
\end{split}
\end{equation}
\end{widetext}

where $\theta_a, \phi_a, \theta_b, \phi_b$ describe the orientation of molecules $a$ and $b$ relative to the intramolecular axis.  $C_{\cdot,\cdot}$ coefficients are given by
\begin{equation}
\begin{split}
    \Cvv = \frac{1}{8}\alpha_\parallel^2\hbar\omega_\parallel, \quad & \Chh = \frac{1}{8}\alpha_\perp^2\hbar\omega_{\perp},\\
    \Cvh = \frac{1}{4}\alpha_\parallel&\alpha_\perp\hbar\frac{\omega_\parallel\omega_\perp}{\omega_\parallel+\omega_\perp}.
\end{split}
\end{equation}
For further details about the geometry and derivation see Ref.~\cite{hirschfelder1964molecular}. When the two molecules are aligned parallel to each other and the intermolecular axis $\theta_a=\theta_b=0$ and $\phi_a=\phi_b$ (which can be set to $0$ without loss of generality) in the completely anisotropic limit $\alpha_{\parallel} \gg \alpha_\perp$ the London dispersion interaction is given by
\begin{equation}
\Delta E = -\frac{\alpha_\parallel^2\hbar\omega_\parallel}{2R^6}
\end{equation}

We now show how two interacting one-dimensional QDOs can recreate this dispersion interaction. The Hamiltonian for pair of one-dimensional QDO is
\begin{equation}
    \H = \frac{\p_1^2}{2\mu} + \frac{\p_2^2}{2\mu} + \frac{1}{2}\mu\omega^2(\x_1^2+\x_2^2 + \gamma \x_1 \x_2).
\end{equation}

where $\gamma = -4\frac{q^2}{\mu\omega^2}\frac{1}{R^3}$. Alternatively $\gamma = -4\frac{\alpha}{R^3}$ when written in terms of the dipole polarisability $\alpha=\frac{q^2}{\mu\omega^2}$. Defining the transformed variables
\begin{equation}
    \p_\pm = \frac{1}{\sqrt{2}}(\p_1 \pm \p_2), \qquad \x_\pm = \frac{1}{\sqrt{2}}(\x_1 \pm \x_2),
\end{equation}
the Hamiltonian can be rewritten in this new uncoupled basis as
\begin{equation}
\begin{split}
   \H =& \left(\frac{\p_+^2}{2\mu} + \frac{1}{2}\mu\omega^2\left(1+\frac{\gamma}{2}\right)\x_+^2\right)\\
    &+\left(\frac{\p_-^2}{2\mu} + \frac{1}{2}\mu\omega^2\left(1-\frac{\gamma}{2}\right)\x_-^2\right).
    \end{split}
\end{equation}
Therefore the ground state energy is given by
\begin{equation}
\begin{split}
    E &= \frac{1}{2}\hbar \omega \left(\sqrt{1+\frac{\gamma}{2}} + \sqrt{1-\frac{\gamma}{2}}\right) \\
    & = \frac{1}{2}\hbar \omega \left(2- \frac{\gamma^2}{16} + O(\gamma^4) \right).
\end{split}
\end{equation}
Using the definition $\gamma = -\frac{4\alpha}{R^3}$ gives the $R^6$ term corresponding for the oscillator pairs
\begin{equation}
    \Delta E = -\frac{1}{32} \gamma^2 \hbar\omega = -\frac{1}{2}\frac{\alpha^2}{R^6}\hbar\omega.
\end{equation}

%APPENDIX TRUNCATION

\section{Truncated Fock space} \label{app: truncated}

For our one-dimensional QDO system, we want to be sure that using a finite dimensional Fock space for each QDO still gives a reasonable approximation to the ground state energy of the true infinite-dimensional system. Fig. \ref{fig: truncated} shows how the ground state energy of the system deviates for different values of the local QDO dimension $d$. We can see that even for $d=4$, the ground state energy is approximated well until very large couplings $\gamma$. For reference, $\gamma=2$ corresponds to a Hamiltonian that is no longer positive-semi definite, at this point the charged Drude particles dissociate from the nuclei and the system breaks down. For physically relevant systems, we expect to require value of $\gamma$ much lower than 2 and we can therefore limit ourselves to small $d$. For the real device experiments in the main text, the smallest separation $R$ considered corresponds to the largest coupling value of $\gamma=1.55$.

\begin{figure}
\includegraphics{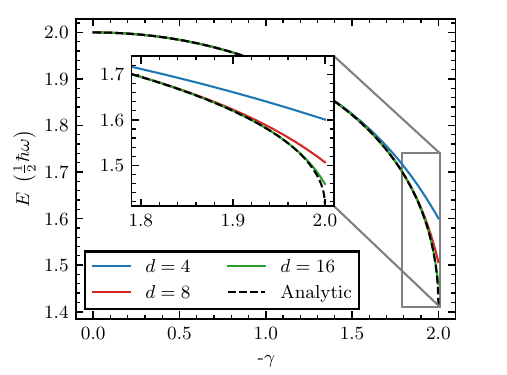}
\caption{Ground state energy of one-dimensional QDO pair as a function of coupling constant $\gamma$ for various values of $d$; the dimensionality of the truncated Fock space for each QDO. Analytic refers to exact analytic calculation of the ground state.}\label{fig: truncated}
\end{figure}

%APPENDIX GROUPING

\section{Decomposition and measurement of Hamiltonian} \label{app: pauli decomp}

\subsection{Scaling in the total number of Pauli operators}

\begin{theorem}\label{thrm: ungrouped circuits}
For a $d$-level harmonic QDO, there are $\left(\frac{1}{2}d\log_2d\right)^2$ non-zero Pauli terms in the binary encoding of the $\x\otimes\x$ interaction.
\end{theorem}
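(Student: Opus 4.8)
The plan is to reduce the two-mode count to a single-mode count, bound the survivors of $\x$ by a parity argument, and then face a non-vanishing claim as the one genuinely delicate point.

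\emph{Step 1 (reduction to one mode).} I would write the single-mode position operator in its Pauli expansion $\x = \sum_P c_P\,P$ over $P\in\{\I,X,Y,Z\}^{\otimes m}$, with $c_P = 2^{-m}\tr(P\x)$. Since $\x\otimes\x = \sum_{P,Q} c_P c_Q\,(P\otimes Q)$ and the $2m$-qubit strings $P\otimes Q$ are all distinct, the number of non-zero terms of $\x\otimes\x$ equals $K_m^2$, where $K_m$ is the number of non-zero $c_P$. It therefore suffices to prove $K_m = \tfrac12 d\log_2 d = m\,2^{m-1}$.

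\emph{Step 2 (which strings can survive: the upper bound).} In the truncated Fock basis $\x$ is tridiagonal, coupling only $\ket{\underline n}$ to $\ket{\underline{n+1}}$ with the real entry $\sqrt{(n+1)/2}$. A Pauli string $P$ has a non-zero matrix element between computational states $a,b$ exactly when $a\oplus b$ equals its flip pattern (the qubits carrying $X$ or $Y$), so $\tr(P\x)\ne 0$ forces the flip pattern of $P$ to equal $n\oplus(n+1)$ for some $n\le d-2$. Because adding $1$ flips a trailing block of bits, $n\oplus(n+1)=2^{j}-1$ for some $j\in\{1,\dots,m\}$, i.e. the only admissible patterns are the $j$ lowest qubits. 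Restricted to such a pattern, the unflipped qubits carry $\I$ or $Z$ and contribute a real sign, while each flipped qubit contributes $1$ (for $X$) or $\pm\imag$ (for $Y$); the product of these phases is real iff the number of $Y$'s is even. Since $\x$ is real and symmetric, $\tr(P\x)=2\sum_n \sqrt{(n+1)/2}\,\mathrm{Re}(P_{n+1,n})$, which vanishes whenever $P$ carries an odd number of $Y$'s. Counting the strings passing both filters gives, for each $j$, $2^{j-1}$ choices on the flipped qubits (even $Y$-count) times $2^{m-j}$ on the unflipped qubits, i.e. $2^{m-1}$; summing over $j$ yields $K_m\le m\,2^{m-1}$.

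\emph{Step 3 (the survivors are non-zero: the main obstacle).} It remains to show every string passing the two filters actually has $c_P\ne0$, which is what makes the bound tight. Fixing a flip pattern $2^{j}-1$ and a subset $S$ of the high qubits carrying $Z$, the contributing Fock indices satisfy $n+1 = 2^{j-1}(2w+1)$ with $w$ ranging over $\{0,\dots,2^{m-j}-1\}$, and the coefficient collapses, up to a non-zero prefactor, to the Walsh sum $\sum_{w}(-1)^{S\cdot w}\sqrt{2w+1}$. The task is to prove this signed sum of square roots of distinct odd integers never vanishes, and I expect this to be the crux. The clean route is the classical linear independence over $\mathbb{Q}$ of $\sqrt{s}$ for distinct squarefree $s$, which reduces any cancellation to one occurring separately within each squarefree class; one then argues that a single Walsh sign pattern cannot annihilate every class at once. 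An equivalent and perhaps more transparent organization is a recursion on $m$ from the block form of $\x$ with respect to the most significant qubit: the off-diagonal block couples $\ket{0\cdots0}$ to $\ket{1\cdots1}$ through one matrix element and hence contributes exactly $2^{m-1}$ manifestly non-cancelling terms, while the two diagonal blocks reproduce $(m-1)$-mode position-type operators and supply $2K_{m-1}$ terms, giving $K_m = 2K_{m-1}+2^{m-1}$ with $K_1=1$ and hence $K_m=m\,2^{m-1}$. The non-vanishing of the diagonal-block coefficients is the same delicate point, and is ultimately where the square-root independence must be invoked.
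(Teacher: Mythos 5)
Your Steps 1 and 2 are essentially the paper's own argument in different clothing: the paper also reduces to counting terms in a single $\x$, writes $\ket{\underline{n+1}}\bra{\underline{n}}$ as projectors on the high qubits tensored with $\ket{1}\bra{0}\otimes(\ket{0}\bra{1})^{\otimes k}$ on the trailing block of flipped bits (your pattern $n\oplus(n+1)=2^{j}-1$), and uses Hermiticity of $\ket{\underline{n+1}}\bra{\underline{n}}+\ket{\underline{n}}\bra{\underline{n+1}}$ to discard half of the $2^m$ strings (your even-$Y$ parity filter), arriving at $2^{m-1}$ strings per value of $k$ and hence $m\,2^{m-1}$ in total. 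So up to the upper bound the two proofs coincide.

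Where you genuinely diverge is Step 3, and you are right that this is the delicate point: the paper asserts that each value of $k(n)$ ``contributes $2^{m}/2$ unique Pauli operators'' without ever checking that the several Fock indices $n$ sharing a flip pattern do not conspire to cancel a coefficient. That check is exactly your Walsh sum $\sum_{w}(-1)^{S\cdot w}\sqrt{2w+1}$, and the paper's proof silently assumes it is non-zero. Your proposal names the issue but does not close it; the phrase ``a single Walsh sign pattern cannot annihilate every class at once'' is the whole remaining content and is not argued. It can be closed along the line you suggest: by Besicovitch's theorem the sum vanishes only if the signed sum of the $t_w$ vanishes separately within each squarefree class $2w+1=s\,t_w^{2}$, and any odd squarefree $s$ with $(2V-1)/9<s\leq 2V-1$ (where $V=2^{m-k-1}$ is the range of $w$) yields a singleton class whose contribution is $\pm 1\neq 0$; such an $s$ exists for every $V\geq 1$. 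Note also that your caution is warranted rather than pedantic: individual squarefree classes \emph{can} be annihilated by admissible Walsh patterns (e.g.\ the class of perfect squares via $1-3-5+7=0$), so the singleton-class argument, not a blanket ``no cancellation'' claim, is what saves the count. With that addendum your proof is complete and is in fact more rigorous than the paper's.
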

\begin{proof}
The following single qubit operators can be represented in the Pauli basis
\begin{equation}\label{eqn: single qubit ops}
\begin{split}
\ket{0}\bra{0} & = \frac{1}{2}(\I + Z),\\
\ket{1}\bra{0} & = \frac{1}{2}(X - iY),\\
\end{split}
\qquad
\begin{split}
\ket{1}\bra{1} & = \frac{1}{2}(\I - Z),\\
\ket{0}\bra{1} & = \frac{1}{2}(X + iY).\\
\end{split}
\end{equation}

We first consider a $\x = \frac{1}{\sqrt{2}}(\aD + \a)$ for a single QDO, where the ladder operators $\aD$ and $\a$ for the $d$-level Fock space are given by
\begin{equation}
\begin{split}
\aD &= \sum_{n=0}^{d-1} \sqrt{n+1} \ket{\underline{n+1}}\bra{\underline{n}}, \\
\a &= \sum_{n=0}^{d-1} \sqrt{n+1} \ket{\underline{n}}\bra{\underline{n+1}},
\end{split}
\end{equation}
which follow the usual bosonic commutation relations. We begin by considering a single $\ket{\underline{n+1}}\bra{\underline{n}}$ which given in terms of $k$ single qubit operators is

\begin{equation}
\begin{split}
\ket{\underline{n+1}}\bra{\underline{n}} = & \left(\bigotimes_{i=k+1}^{m-1} \ket{\bin(n)_i}\bra{\bin(n)_i} \right)\\ &\otimes \ket{1}\bra{0} \otimes \left(\ket{0}\bra{1}\right)^{\otimes k},
\end{split}
\end{equation}

where $\bin(n)_i \in\set{0,1}$ is the binary digit at position $i$ of the binary representation of $n$ and $k(n) = \max \set{i \mid \bin(n)_i=0}$ is the position of the right-most $0$. Using the the definitions in \eqref{eqn: single qubit ops} this is equal to 
\begin{equation}\label{eqn: binary plus pauli ops}
\begin{split}
\ket{\underline{n+1}}\bra{\underline{n}} = & \left(\bigotimes_{i=k+1}^{m-1} (\I + (-1)^{\bin(n)_i}Z)\right)\\ &\otimes (X+iY) \otimes \left(X-iY\right)^{\otimes k}.
\end{split}
\end{equation}
Expanding \eqref{eqn: binary plus pauli ops} will give $2^m$ unique Pauli terms of the form $\hat{P} = c\sigma_i^{\otimes m}$ where $\sigma_i^{\otimes m} \in \set{\I, X, Y, Z}^m$ and $c \in \set{\pm 1, \pm i}$. Each $\hat{P}$ will either be hermitian or anti-hermitian, depending on whether $c=\pm 1$ or $c=\pm i$.

For each term in $\x$ like $(\ket{\underline{n+1}}\bra{\underline{n}} + \ket{\underline{n}}\bra{\underline{n+1}})$, we note that $\ket{\underline{n+1}}\bra{\underline{n}} = \left(\ket{\underline{n}}\bra{\underline{n+1}}\right)^\dagger$ and their sum is Hermitian. Therefore the Pauli terms within this summed term are simply two times the Hermitian terms in the expansion of \eqref{eqn: binary plus pauli ops} (anti-Hermitian terms cancel). Of the $2^m$ different $\hat{P}$ within \eqref{eqn: binary plus pauli ops}, half will be Hermitian ($c=\pm 1$) and half will be anti-Hermitian ($c=\pm i$). Therefore, there are a total of $\frac{1}{2}2^m$ unique Pauli operators in the compact representation of $(\ket{\underline{n+1}}\bra{\underline{n}} + \ket{\underline{n}}\bra{\underline{n+1}})$.

Finally, in the summation 
\begin{equation}\label{eqn: x pauli}
\hat{x} = \sum_{n=0}^{d-1} \sqrt{n+1} (\ket{\underline{n+1}}\bra{\underline{n}} + \ket{\underline{n}}\bra{\underline{n+1}}),
\end{equation}
there are $\log_2d=m$ unique values of $k(n)$, each  of which will contribute $\frac{2^m}{2}$ unique Pauli operators to the binary encoding of $\x$. Thus there are a total of $\left(\frac{1}{2}d\log_2d\right)^2$ Pauli operators within $\x\otimes\x$, where we have used $m=\log_2d$.
\end{proof}

\subsection{Scaling of Number of Circuits Required}

\begin{lemma}\label{lem: coupling groups}
The $\left(\frac{1}{2}d\log_2d\right)^2$ Pauli operators for the $\x\otimes \x$ interaction in the binary encoding can be measured using $(d-1)^2$ circuits.
\end{lemma}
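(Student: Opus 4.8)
The plan is to reduce the two-register problem to a single-register one and then take tensor products. Concretely, I would first show that the $\tfrac12 d\log_2 d$ Pauli strings appearing in a single $\x$ (in the binary encoding) partition into exactly $d-1$ groups that are qubit-wise commuting (QWC), and then argue that, because the two oscillators occupy disjoint qubit registers, the products of one QWC group from each register are again QWC. Since a QWC group is evaluable with a single measurement circuit, this yields the claimed $(d-1)^2$ circuits.

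For the single-$\x$ step I would reuse the structure uncovered in the proof of Theorem~\ref{thrm: ungrouped circuits}. There each Hermitian Pauli string of $\x$ arises from a transition $\ket{\underline{n+1}}\bra{\underline n}+\mathrm{h.c.}$ and, writing $k=k(n)$ for the position of the right-most $0$ bit of $n$, the string carries an $X$ or $Y$ on each qubit $0,\dots,k$ (with an even number of $Y$'s) and an $\I$ or $Z$ on each qubit $k+1,\dots,m-1$. I would label a string by its \emph{type}: the pair consisting of $k$ together with the $X/Y$ pattern $s\in\{X,Y\}^{k+1}$ on the lower qubits. The decisive observation is that all strings sharing a type are QWC: on qubits $0,\dots,k$ they carry the identical Paulis $s$, and on qubits $k+1,\dots,m-1$ they carry only $\I$ or $Z$, which are mutually QWC. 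Each type therefore forms one measurement group, readable in the single tensor-product basis that measures $s_j$ on every qubit $j\le k$ and $Z$ on every qubit $j>k$.

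It then remains to count the types. For fixed $k$ the admissible patterns $s$ are the $2^{k+1}/2=2^k$ elements of $\{X,Y\}^{k+1}$ with an even number of $Y$'s, so the total number of types is $\sum_{k=0}^{m-1}2^k=2^m-1=d-1$. This gives a partition of the single-$\x$ strings into $d-1$ QWC groups. Finally, labelling the two oscillators' registers as disjoint qubit sets, every string of $\x_1\x_2$ is a product $P\otimes Q$ of a single-register string from each factor; choosing one of the $d-1$ groups on register $1$ and one on register $2$, each product $P\otimes Q$ with $P,Q$ in the chosen groups is QWC (QWC is checked qubit by qubit and the registers do not overlap). The $(d-1)^2$ products of groups thus partition all $\left(\tfrac12 d\log_2 d\right)^2$ Pauli strings, each measurable by one circuit.

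I expect the main obstacle to be the single-register classification: one must check carefully that the even-$Y$-parity constraint is precisely what makes each type internally QWC, and that the $\I/Z$ ``tail'' on the high qubits never obstructs commutation within a group. Verifying the count $\sum_{k=0}^{m-1}2^k=d-1$ and the disjoint-register product argument are then routine, and since the lemma asserts only achievability I need not argue that $(d-1)^2$ is optimal.
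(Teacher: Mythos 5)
Your proposal is correct and follows essentially the same route as the paper's proof: both identify each measurement group by the $X/Y$ pattern on the low $k+1$ qubits (with the Hermiticity/even-$Y$ condition cutting $2^{k+1}$ down to $2^k$ patterns per $k$), note that the $\I/Z$ tails are all absorbed into a single $Z$-basis measurement on the remaining qubits, sum $\sum_{k=0}^{m-1}2^k=d-1$, and square over the two disjoint registers. The only difference is cosmetic — you phrase the groups as ``types'' while the paper phrases them as ``covers'' (shared measurement bases) — and you correctly observe that only achievability, not optimality, is required.
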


\begin{proof}
We define the cover of an $M$ qubit Pauli operator $\P\in\set{\I,X,Y,Z}^M$ as 
\begin{equation}
\begin{split}
\cover(\P) = \big\{\hat{B}&\in\set{\I,X,Y,Z}^{\otimes M} \\ 
&| \quad \hat{B}^{(i)} =\P^{(i)} \text{ when } \P^{(i)} \neq \I\big\},
\end{split}
\end{equation}
where $\P^{(i)}$ is Pauli operator acting on the $i$\textsuperscript{th} qubit. For a collection of Pauli operators $\set{\P_1,\P_2,\P_3, \dots}$ that qubit-wise commute, there exists at least one basis $\hat{B}$ which covers all of the operators:
\begin{equation}
\hat{B} \in \cover(\P_1) \cup \cover(\P_2) \cup \cover(\P_3) \cup \dots.
\end{equation}

The expectation values of every operator in $\set{\P_1,\P_2,\P_3, \dots}$ can be measured using a single circuit in which qubit $i$ is measured in the $\hat{B}^{(i)}$ basis, where the $\pm 1$ measurement value is used if $\P^{(i)} \neq \I$, and discarded otherwise. Here we called bases independent if they do not qubit wise commute.

We show that there are $(d-1)$ independent covers for terms in $\x$ and therefore $(d-1)^2$ covers needed to measure $\x\otimes\x$.

We begin by considering the expansion of $\ket{\underline{n+1}}\bra{\underline{n}}$ given in \eqref{eqn: binary plus pauli ops}. Expanding out the $\bigotimes_{i=k+1}^{m-1} (\I + (-1)^{\bin(n)_i}Z)$ term on the first $(m-k-1)$ qubits will give tensor products of only $\I$ and $Z$. The covers for this $2^{(m-k-1)}$ operator subspace will always contain the basis $\bigotimes_{i=k+1}^{m-1} Z^{(i)}$, which can be used to measure them all. Then, expanding the $(X+iY) \otimes \left(X-iY\right)^{\otimes k}$ term on the remaining $(m-k+1)$ qubits, gives $2^{(k+1)}$ terms that do not qubit-wise commute and so each correspond to a unique basis.

Thus, there are $2^{(k+1)}$ independent bases for the terms in $\ket{\underline{n+1}}\bra{\underline{n}}$. Using the argument that $\ket{\underline{n+1}}\bra{\underline{n}} =  \ket{\underline{n}}\bra{\underline{n+1}}^\dagger$ as in Theorem \ref{thrm: scaling}, only the Hermitian terms remain when summing the two. This halves the number of unique covers and so $2^k$ bases are required for $\ket{\underline{n+1}}\bra{\underline{n}}+ \ket{\underline{n}}\bra{\underline{n+1}}$.

Again, for the sum of these terms given in \eqref{eqn: x pauli}, there are $m$ unique values for $k(n)\in \set{0,1,\dots,m-1}$ when written in the form of \eqref{eqn: single qubit ops}. For each unique value of $k(n)$, the value of $\bin(n)$ does not affect the coverings required [since the first $(m-k-1)$ qubits are always covered with $Z^{\otimes(m-k-1)}$], so each unique $k(n)$ contributes a unique set of $2^k$ coverings to the summation. Therefore, the total number of coverings needed for the Pauli terms in $\x$ are $\sum_{k=0}^{m-1}2^k = 2^m-1$, where we have used a standard result for the summation.

Finally, the number of unique coverings for the $\x\otimes\x$ interaction is then $\left(2^m-1\right)^2 = (d-1)^2$.
\end{proof}

\begin{definition}
[Hypergraphs \cite{voloshin2009introduction}]
A hypergraph is a generalisation of a graph in which edges can connect arbitrary number (i.e., not just a pair) of vertices. Formally, a hypergraph $G = (V,E)$ consists of $V$ a finite set of vertices and $E$ a set of non-empty subsets of $V$ called hyperedges. 
\end{definition}

\begin{definition}
[Complete Hypergraphs \cite{voloshin2009introduction}]
For $0\leq r \leq N$, a complete hypergraph $K_N^r=(V,E)$ is a hypergraph with vertices $V$ and set of hyperedges $E$ such that $|V|=N$ and $E$ contains all $r$-subsets of $V$. 
\end{definition}

\begin{lemma}[Baranyai's Theorem~\cite{van2001course}] For a complete hypergraph $K_N^r$ where $2\leq r<N$ and $r$ divides $N$, the $N$ vertices can be partitioned into subsets (called 1-factorisation) of $r$ vertices in ${{N-1} \choose {r-1}}$ different ways, such that each $r$-element subset (hyper edge) appears in exactly one partition.\label{thrm: baranyai}
\end{lemma}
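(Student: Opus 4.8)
The plan is to prove the statement in the form in which it is classically known --- Baranyai's $1$-factorisation theorem --- namely that when $r \mid N$ the edge set of $K_N^r$ (all $r$-subsets of a vertex set $[N]$) decomposes into \emph{perfect matchings}, each matching being a partition of $[N]$ into $N/r$ blocks of size $r$. First I would fix the count: a single such partition uses exactly $N/r$ of the $\binom{N}{r}$ edges, and each edge must lie in exactly one partition, so the number of partitions is forced to be $\binom{N}{r}\,(r/N) = \binom{N-1}{r-1}$, matching the claim. This reduces the theorem to an existence statement: the $\binom{N}{r}$ edges can be sorted into $\binom{N-1}{r-1}$ classes so that every class is a perfect matching.

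A direct induction on $N$ is too weak, so I would instead prove a stronger, more flexible statement that inducts cleanly: for any prescribed nonnegative sizes $a_1,\dots,a_t$ with $\sum_i a_i=\binom{N}{r}$, the $r$-subsets of $[N]$ can be split into classes $\mathcal A_1,\dots,\mathcal A_t$ with $|\mathcal A_i|=a_i$ such that in each class $\mathcal A_i$ every point of $[N]$ is covered either $\lfloor r a_i/N\rfloor$ or $\lceil r a_i/N\rceil$ times. The lemma is the special case $t=\binom{N-1}{r-1}$ with $a_i = N/r$ for all $i$: then $r a_i/N = 1$, so every point sits in exactly one set of each class, i.e.\ each class is a perfect matching, and the classes are the desired $1$-factorisation.

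The induction runs on $N$, and its heart is deciding how the new point $N$ is inserted. Splitting the $r$-subsets of $[N]$ according to whether they contain $N$, those containing $N$ are in bijection with the $(r-1)$-subsets of $[N-1]$, so I would apply the inductive hypothesis to organise the subsets of $[N-1]$ (of sizes $r$ and $r-1$) into classes and then model the choice of which sets absorb the point $N$ as a transportation (integral flow) problem whose unknowns count, per class, how many of its sets are extended by the new point. A feasible \emph{fractional} solution always exists by an averaging/symmetry argument (distribute $N$ proportionally across classes), and the constraint matrix of this transportation problem is an interval/network matrix, hence totally unimodular; by the integral flow theorem the fractional solution can be replaced by an integral one, which is precisely a valid merging of the two families into classes of the required sizes with near-uniform coverage on $[N]$.

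The main obstacle I anticipate is not any single calculation but the correct simultaneous formulation of the strengthened hypothesis and its flow network: one must arrange the network so that (i) a feasible rational solution provably exists, (ii) the incidence matrix is totally unimodular so integrality can be invoked, and (iii) an integral solution translates back \emph{exactly} into classes of sizes $a_i$ in which every point is covered $\lfloor r a_i/N\rfloor$ or $\lceil r a_i/N\rceil$ times with each $r$-subset used once. Verifying these three compatibility conditions together --- rather than the routine floor/ceiling bookkeeping --- is where the real work lies.
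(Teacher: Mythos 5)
Your outline is correct and is essentially the standard integral-flow proof of Baranyai's theorem, which is exactly what the paper relies on: it gives no proof of its own and simply cites van Lint and Wilson (Chapter 38), where the argument proceeds via the same strengthened induction hypothesis on nearly-balanced classes of prescribed sizes and the total-unimodularity/integral-flow step you describe. Your counting of the number of partitions as $\binom{N}{r}\,(r/N)=\binom{N-1}{r-1}$ and your identification of the lemma as the special case $a_i=N/r$ are both accurate.
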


\begin{proof} See \cite[Chapter 38]{van2001course}.\end{proof}

For example, the complete hypergraph $K_4^2$ contains four vertices, each connected by two-element hyperedges (i.e, traditional edges). Baranyai's theorem states that the vertices can be partitioned into ${{4-1} \choose {2-1}} = 3$ partitions. $K_4^2$ and its corresponding partitions are shown in Fig.~\ref{fig: baranyai}.

\begin{figure}
\centering
% \begin{tikzpicture}[main/.style = {draw, circle}] 
% \node[main] (1) {};
% \node[main] (2) [below of=1] {}; 
% \node[main] (3) [right of=2] {};
% \node[main] (4) [above of=3] {}; 
% \draw (1) -- (2);
% \draw (2) -- (3);
% \draw (3) -- (4);
% \draw (4) -- (1);
% \draw (1) -- (3);
% \draw (2) -- (4);
% \end{tikzpicture}\vspace{2em}

% \begin{tikzpicture}[main/.style = {draw, circle}] 
% \node[main] (1) {};
% \node[main] (2) [below of=1] {}; 
% \node[main] (3) [right of=2] {};
% \node[main] (4) [above of=3] {}; 
% \draw (1) -- (2) [dashed];
% \draw (2) -- (3);
% \draw (3) -- (4) [dashed];
% \draw (4) -- (1);
% \draw (1) -- (3) [dashed];
% \draw (2) -- (4) [dashed];
% \end{tikzpicture}\qquad
% \begin{tikzpicture}[main/.style = {draw, circle}] 
% \node[main] (1) {};
% \node[main] (2) [below of=1] {}; 
% \node[main] (3) [right of=2] {};
% \node[main] (4) [above of=3] {}; 
% \draw (1) -- (2);
% \draw (2) -- (3) [dashed];
% \draw (3) -- (4);
% \draw (4) -- (1) [dashed];
% \draw (1) -- (3) [dashed];
% \draw (2) -- (4) [dashed];
% \end{tikzpicture}\qquad
% \begin{tikzpicture}[main/.style = {draw, circle}] 
% \node[main] (1) {};
% \node[main] (2) [below of=1] {}; 
% \node[main] (3) [right of=2] {};
% \node[main] (4) [above of=3] {}; 
% \draw (1) -- (2) [dashed];
% \draw (2) -- (3) [dashed];
% \draw (3) -- (4) [dashed];
% \draw (4) -- (1) [dashed];
% \draw (1) -- (3);
% \draw (2) -- (4);
% \end{tikzpicture}
\includegraphics{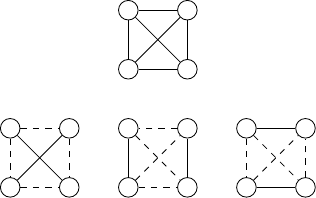}
\caption{Fully connected graph $K^2_4$ along with the three corresponding partitions of edges.}\label{fig: baranyai}
\end{figure}

\begin{theorem}\label{thrm: scaling}
The number of circuits required for measurement of the Hamiltonian  of the form
\begin{equation}
\H = \sum_{i=1}^N (\x_i^2 + \p_i^2) + \sum_{j>i} \gamma_{i,j} \x_i\x_j,
\end{equation}
for $N\geq2$ $d$-level QDOs in the binary encoding is upper bounded by
\begin{equation}
\begin{alignedat}{2}
    &(d-1)^2(N-1)+1; \quad &&N \text{ even}\\
    &(d-1)^2N+1;  &&N \text{ odd},
\end{alignedat}\label{eqn: 1D scaling}
\end{equation}
which is $O(d^2N)$.
\end{theorem}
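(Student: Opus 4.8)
The plan is to split $\H$ into its diagonal single-oscillator part and its pairwise interaction part, bound the circuits for each, and combine. First I would observe that by \eqref{eqn: nonint paulis} every operator appearing in $\sum_i(\x_i^2+\p_i^2)$ is a tensor product of $\I$ and $Z$ factors; all such operators are diagonal in the computational basis and hence qubit-wise commute, so the entire non-interacting contribution is obtained from one computational-basis circuit. This supplies the $+1$ in \eqref{eqn: 1D scaling}, and it remains to bound the circuits needed for $\sum_{j>i}\gamma_{i,j}\x_i\x_j$.

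The central observation is that two interactions $\x_i\x_j$ and $\x_k\x_l$ with disjoint index sets act on disjoint $m$-qubit registers, so on every qubit at most one of them is non-identity; their Pauli decompositions therefore trivially qubit-wise commute, and---crucially---their measurement bases can be fused. By Lemma~\ref{lem: coupling groups} each single $\x_i\x_j$ is measured with exactly $(d-1)^2$ circuits. Labelling the bases of every pair $1,\dots,(d-1)^2$ and letting the $s$-th joint circuit apply basis $s$ on all pairs simultaneously (a well-defined joint Pauli basis precisely because the supports are disjoint) shows that any collection of mutually index-disjoint interactions can be measured together using only $(d-1)^2$ circuits, independent of how many pairs it contains.

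This reduces the theorem to covering the $\binom{N}{2}$ interaction pairs by as few sets of mutually index-disjoint pairs as possible. Identifying oscillators with the vertices of $K_N$ and interactions with its edges, such a set is exactly a matching, so I would partition the edges of $K_N$ into matchings. For even $N\ge 4$ Baranyai's theorem (Lemma~\ref{thrm: baranyai}) with $r=2$ applies directly, since $2\mid N$, and yields $\binom{N-1}{1}=N-1$ perfect matchings; the case $N=2$ is the single pair. For odd $N$ I would adjoin a dummy oscillator, decompose $K_{N+1}$ into $N$ perfect matchings, and delete the dummy vertex to recover $N$ (near-perfect) matchings of $K_N$.

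Combining the ingredients, each matching costs $(d-1)^2$ circuits by the fusion argument, so the interaction part needs at most $(N-1)(d-1)^2$ circuits for even $N$ and $N(d-1)^2$ for odd $N$; adding the single diagonal circuit gives \eqref{eqn: 1D scaling}, and since $(d-1)^2=O(d^2)$ this is $O(d^2N)$. The step I expect to require the most care is the basis-fusion claim: one must confirm that the $(d-1)^2$ covers of Lemma~\ref{lem: coupling groups} can genuinely be aligned one-to-one across disjoint pairs so that a single circuit measures one cover from each pair without ever placing two non-commuting bases on the same qubit---both guaranteed by disjointness of supports, but this is exactly why the count stays $(d-1)^2$ per matching rather than multiplying across pairs. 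The odd-$N$ reduction also needs a brief check that the edges incident to the dummy vertex carry no genuine interaction and so are simply discarded.
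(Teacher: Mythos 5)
Your proposal is correct and follows essentially the same route as the paper's proof: a single $Z$-basis circuit for the diagonal part, $(d-1)^2$ circuits per interaction from Lemma~\ref{lem: coupling groups} fused across index-disjoint pairs, and Baranyai's theorem (with a dummy vertex for odd $N$) to partition the edges of $K_N$ into $N-1$ or $N$ matchings. Your explicit treatment of the basis-alignment step and of the $N=2$ edge case is slightly more careful than the paper's wording, but the argument is the same.
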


\begin{proof}
All $(\x^2+p^2) = \sum_{n=0}^{d-1}n\ket{\underline{n}}\bra{\underline{n}}$ terms within the Hamiltonian can be decomposed into Pauli operators given in \eqref{eqn: nonint paulis} of the main text. All of these terms commute within each QDO register as well between QDO registers and can thus be measured using the single covering $Z^{\otimes Nm}$.\\

For the pair-wise coupling terms, from Theorem \ref{lem: coupling groups}, a single $\x\otimes\x$ term between two QDOs can be measured with $(d-1)^2$ circuits. The bases required to measure pairwise coupling terms between disjoint pairs of QDOs are guaranteed to commute since they act on different qubit registers. Therefore, by measuring each of the bases concurrently for collections of mutually disjoint interactions, the number of measurement bases can be reduced in a predictable way. 

We represent interaction terms as a graph in which each vertex corresponds to one of the $N$ QDOs and edges between vertices correspond represent a non-zero pairwise coupling. For all-to-all interactions this will be a fully connected graph. For even $N$ the number of partitions needed to separate interaction edges into disjoint sets is equal by Baranyai's theorem (Lemma~\ref{thrm: baranyai}) for the graph $K_N^2$. For odd $N$ the edges cannot be partitioned exactly - there will always be one unpaired vertex when collecting disjoint edges. Considering a dummy vertex where an edge between that and an QDO vertex gives no contribution to the Hamiltonian, the number of partitions is given by Baranyai's theorem for $K_{N+1}^2$. 

Using these collections, the number of bases that will need to be measured of the pairwise interaction terms is 

\begin{equation}
\begin{alignedat}{2}
    (d-1)^2{N-1 \choose 1} &= (d-1)^2(N-1); \quad &&N \text{ even},\\
    (d-1)^2{N \choose 1} &=(d-1)^2N; \quad &&N \text{ odd}.
\end{alignedat}
\end{equation}

Combining this with the single basis needed to measure the non-interaction term gives an upper bound for the number of bases that must be measured and thus the result.
\end{proof}

\begin{figure*}
\includegraphics{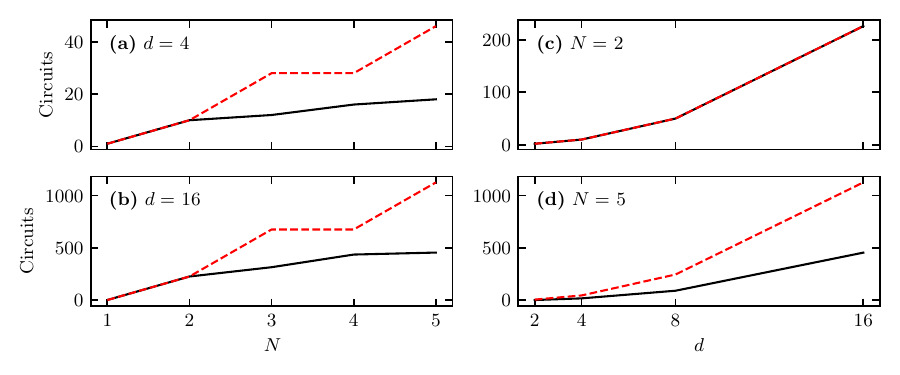}
\caption{Number of circuits required to measure Hamiltonian as \textbf{(a)} and \textbf{(b)}: a function of $N$ for two fixed values of $d$ as well as \textbf{(c)} and \textbf{(d)}: a function of $d$ for two fixed values of $N$. Red dashed line: upper bound scaling given by \eqref{eqn: 1D scaling}, black solid line: using largest-degree first colouring for grouping Paulis. In \textbf{(c)}, the two lines coincide. Trivially, the number of circuits for $N=1$ is always 1.}\label{fig: scaling}
\end{figure*}

This upper bound is plotted as a function of $N$ and $d$ in Fig. \ref{fig: scaling} as well as the number of circuits required when using a heuristic sorting algorithm, namely largest-degree first colouring ~\cite{verteletskyi2020measurement}. We see that for $N=1$ both curves agree as there are no interaction terms and that the heuristic sorted saturates the upper bound for $N=2$ QDOs. For $N>2$ we see that the sorting significantly reduces the number of circuits below the upper bound due to additional qubit wise commutation between non-disjoint interaction terms.

\begin{corollary}
The number of circuits required for measurement of the Hamiltonian for $N\geq r$ $d$-level QDOs with $r$-order coupling in the binary encoding is $O(d^2N^{r-1})$.
\end{corollary}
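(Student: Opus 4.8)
The plan is to lift the two ingredients of Theorem~\ref{thrm: scaling} — the per-interaction cover count of Lemma~\ref{lem: coupling groups} and the Baranyai $1$-factorisation of the interaction graph — from the dipole case $r=2$ to a general $r$-body coupling term $\x_{i_1}\x_{i_2}\cdots\x_{i_r}$. The non-interaction part is handled exactly as before: all $(\x^2+\p^2)$ terms are diagonal in the computational basis and are captured by the single covering $Z^{\otimes Nm}$, contributing one circuit regardless of $r$. So the whole task reduces to counting the circuits needed for the interaction terms.

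First I would generalise Lemma~\ref{lem: coupling groups}. Each factor $\x_{i_j}$ acts on its own register of $m=\log_2 d$ qubits, and the proof of Lemma~\ref{lem: coupling groups} shows that the Pauli terms of a single $\x$ are spanned by $2^{m}-1=d-1$ mutually independent covering bases. Because the $r$ registers appearing in $\x_{i_1}\cdots\x_{i_r}$ are pairwise disjoint, any choice of one cover per register yields a valid cover of the full product, and bases supported on disjoint registers automatically qubit-wise commute. Hence the product is measured by $(d-1)^{r}$ circuits. This step is a routine tensor-product argument and introduces no new idea beyond disjointness of supports.

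Next I would model the all-to-all $r$-body interactions as the hyperedge set of the complete $r$-uniform hypergraph $K_N^{r}$, one hyperedge per term. Baranyai's theorem (Lemma~\ref{thrm: baranyai}) partitions these hyperedges into $\binom{N-1}{r-1}$ parallel classes, each class being a set of pairwise-disjoint hyperedges. Within one class the hyperedges occupy disjoint qubit registers, so their $(d-1)^{r}$ covers commute and can be reused across the entire class, costing $(d-1)^{r}$ circuits per class. Adding the single non-interaction circuit gives the upper bound $(d-1)^{r}\binom{N-1}{r-1}+1$. Since $\binom{N-1}{r-1}=O(N^{r-1})$ for fixed $r$, the circuit count is a $d$-dependent prefactor times $O(N^{r-1})$, which is exactly the claimed polynomial scaling of degree $r-1$ in the number of oscillators.

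The main obstacle is the divisibility hypothesis of Baranyai's theorem, which requires $r \mid N$. For the dipole case, Theorem~\ref{thrm: scaling} dealt with odd $N$ by adjoining a single dummy vertex and passing to $K_{N+1}^{2}$, the extra edges contributing nothing to the Hamiltonian. For general $r$ I would adjoin up to $r-1$ dummy QDOs so that the padded vertex count is a multiple of $r$, apply Baranyai to the padded complete hypergraph, and discard every hyperedge touching a dummy vertex, since those correspond to absent couplings. The only thing to verify is that this padding does not spoil the asymptotics: adjoining $O(r)$ vertices changes the number of parallel classes from $\binom{N-1}{r-1}$ to at most $\binom{N+r-2}{r-1}$, which for fixed $r$ is still $O(N^{r-1})$, leaving the bound intact.
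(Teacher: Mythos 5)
Your proposal is correct and follows essentially the same route as the paper: generalise the cover count of Lemma~\ref{lem: coupling groups} to $(d-1)^r$ for an $\x^{\otimes r}$ term on disjoint registers, apply Baranyai's theorem to $K_N^r$ to get $\binom{N-1}{r-1}$ parallel classes, and pad with dummy vertices when $r \nmid N$, yielding the bound $(d-1)^r\binom{N-1}{r-1}+1$. Note that both your derivation and the paper's own proof give $O(d^rN^{r-1})$, so the $d^2$ in the corollary's statement appears to be a typo for $d^r$; your phrasing (a $d$-dependent prefactor times $O(N^{r-1})$) correctly captures the $N$-scaling that the corollary is really about.
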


\begin{proof}
The proof is the same as that for Theorem \ref{thrm: scaling} except using $r>2$ for the $\x^{\otimes r}$ interaction in Lemma \ref{lem: coupling groups} and the corresponding hypergraph $K^r_N$ in Lemma \ref{thrm: baranyai}. This gives an upper bound on the total number of groups of
\begin{equation}
(d-1)^r{N-1 \choose r-1}+1,
\end{equation}
assuming $r$ divides $N$, which is $O(d^rN^{(r-1)})$. If $r$ does not divide $N$, the $O(d^rN^{(r-1)})$ scaling will remain as dummy vertices can be used to give the number of vertices that is divisible for $r$.

\end{proof}

\subsection{Number of circuits for three-dimensional QDOs}

\begin{theorem}
The number of circuits required for measurement of the Hamiltonian of the form
\begin{equation}\label{eqn: 3d hamiltonian 2}
\begin{split}
H = \sum_{i=1}^N & \left(\frac{\bf\p_i^2}{2\mu} + 2\mu\omega\bfx_i^2\right) \\
& + q^2\sum_{j>i}\frac{3(\bfx_i\cdot\bfR_{i,j})(\bfx_j\cdot\bfR_{i,j}) - \bfx_i\cdot\bfx_jR_{i,j}^2}{R_{i,j}^5},
\end{split}
\end{equation} 
for $N$ three-dimensional QDOs with $d$ levels per spatial dimension in the binary encoding is upper bounded by

\begin{equation}
\begin{alignedat}{2}
    &(d-1)^2(N-1)+1; \quad &&N \text{ even},\\
    &(d-1)^2N+1; &&N \text{ odd}.
\end{alignedat}
\end{equation}
\end{theorem}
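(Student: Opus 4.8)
The plan is to reduce the three-dimensional problem to the one-dimensional machinery already established in Lemma~\ref{lem: coupling groups} and Theorem~\ref{thrm: scaling}, exploiting the fact (noted in Sec.~\ref{subsec: 3d}) that each three-dimensional QDO is encoded in three disjoint $m$-qubit registers, one per Cartesian axis. First I would dispose of the non-interacting part $\sum_i(\bfp_i^2/2\mu + \dots)$: each single-axis term $\hat{u}_i^2+\hat{p}_{u_i}^2$ decomposes, via \eqref{eqn: nonint paulis}, into $Z$-type Paulis supported on register $u_i$, so all such terms across all axes and all QDOs are simultaneously diagonal and measurable with the single circuit $Z^{\otimes 3mN}$. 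Next I would expand the dipole--dipole coupling in \eqref{eqn: 3d hamiltonian 2}; for a fixed pair $(i,j)$ it is a linear combination of the nine products $\hat{u}_i\hat{v}_j$ with $u,v\in\set{x,y,z}$, and each such product has exactly the structure of the one-dimensional $\x\otimes\x$ interaction but acting on the disjoint register pair $(u_i,v_j)$.

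The key step is to show that a single three-dimensional pairwise interaction still requires only $(d-1)^2$ circuits, rather than $9(d-1)^2$. Lemma~\ref{lem: coupling groups} gives, for the one-dimensional position operator, a family of $(d-1)$ independent covers; since $\hat{x}_i$, $\hat{y}_i$ and $\hat{z}_i$ are structurally identical and merely relabelled onto the three registers of QDO $i$, the same index $a\in\set{1,\dots,d-1}$ labels a cover on each of those registers. I would then construct, for each ordered index pair $(a,b)$, one measurement circuit in which all three registers of QDO $i$ are measured in their $a$-th cover basis and all three registers of QDO $j$ in their $b$-th cover basis. Because the registers are pairwise disjoint this is a legitimate joint measurement, and it captures the $(a,b)$ cover-pair of \emph{every} product $\hat{u}_i\hat{v}_j$ at once. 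Ranging $(a,b)$ over the $(d-1)^2$ possibilities therefore covers all Pauli terms of the full three-dimensional interaction between $i$ and $j$ using $(d-1)^2$ circuits.

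Finally I would assemble the global count exactly as in Theorem~\ref{thrm: scaling}. Interactions between disjoint pairs of QDOs act on disjoint $3m$-qubit blocks, so within one Baranyai partition (Lemma~\ref{thrm: baranyai}) of the interaction graph $K_N^2$---or $K_{N+1}^2$ with a dummy vertex when $N$ is odd---every edge can be assigned the same index pair $(a,b)$ and measured concurrently, so each partition costs $(d-1)^2$ circuits. Summing over the $\binom{N-1}{1}=N-1$ partitions for even $N$ (resp.\ $N$ partitions for odd $N$) and adding the single non-interacting circuit yields the stated bounds. The main obstacle is the second step: the naive count would multiply by nine, and the nontrivial observation is that the intra-QDO axis registers are disjoint, which lets a common cover index be shared across the three axes so that all nine products collapse onto the same $(d-1)^2$ circuits.
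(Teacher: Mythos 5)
Your proposal is correct and follows essentially the same route as the paper: one $Z$-basis circuit for all non-interacting terms, the observation that the nine axis products $\hat{u}_i\hat{v}_j$ live on disjoint registers and can share a common cover index per QDO so that each 3D pair still costs only $(d-1)^2$ circuits, and Baranyai's theorem to combine disjoint pairs. If anything, your explicit $(a,b)$ cover-pair construction makes the key step more precise than the paper's own wording, which asserts the qubit-wise commutation somewhat loosely and even slips into quoting the Pauli-term count $\left(\frac{d}{2}\log_2 d\right)^2$ where the circuit count $(d-1)^2$ of Lemma~\ref{lem: coupling groups} is what matches the stated bound.
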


\begin{proof}
The uncoupled terms within (\ref{eqn: 3d hamiltonian 2}) will correspond to operators $\sum_{n=0}^{2^m-1}n\ket{\underline{n}}\bra{\underline{n}}$ for each of the three spatial dimensions for each QDO. As before, these require just $Z$ measurements on each of the qubits and can thus be measured using a single circuit.\\

Written in terms of operators for each one-dimensional component, the interaction part of (\ref{eqn: 3d hamiltonian 2}) will contain all nine terms of the form $\set{\x_i,\y_i,\z_i}\otimes\set{\x_j,\y_j,\z_j}$, where across each $3m$ qubit register for a single QDO, the operators making up $\bfx_i$ are given by $\x_i = \x\otimes\I^{\otimes m}\otimes\I^{\otimes m}$, $\y_i = \I^{\otimes m}\otimes\x\otimes\I^{\otimes m}$ and $\z_i = \I^{\otimes m}\otimes\I^{\otimes m}\otimes\x$, following a similar pattern for $\bfx_j$, $\bf\p_i$ and $\bf\p_j$. Since every single term of the form $\x\x$, $\x\y$, $\y\z$ etc. in the coupling interactions will consist of the same decomposition of Pauli operators either acting on the same or different QDO subregisters. These all qubit wise commute and each term each of the $(\frac{d}{2}\log_2 d)^2$ Pauli terms for all oscillator-oscillator interaction terms can be measured together. Thus for each pair of coupled three-dimensional QDOs, only $(\frac{d}{2}\log_2 d)^2$ circuits are needed; the same number required for a pairwise interaction between one-dimensional QDOs.

Finally, using Lemma \ref{thrm: baranyai} and following the same logic as Theorem \ref{thrm: scaling}, an upper bound for the number of circuits needed to measure in \eqref{eqn: 3d hamiltonian 2} is $(N-1)\left(\frac{d}{2}\log_2d\right)^2 + 1$ or $N\left(\frac{d}{2}\log_2d\right)^2 + 1$, for even and odd $N$ respectively.
\end{proof}

%APPENDIX SHOTS

\section{Shot requirements} \label{app: shots}
In this section, we derive an upper bound for an estimate of the number of shots required to achieve a given statistical error on the measurement of $\langle\H\rangle$ for a system of 1D coupled QDOs using the grouping of measurements described in Theorem \ref{thrm: scaling}.

We write a Hamiltonian in the form

\begin{equation}
\H = \sum_{g\in\mathcal{G}}\sum_{i \in g} a_i \P_i,
\end{equation}
where $\mathcal{G} = \set{g|[\P_i,\P_j]=0, \forall i,j \in g}$ is the set of all collections for the Pauli operatos $\P_i$ in which all operators in a collection qubit wise commute. A standard method \cite{wecker2015progress} using Lagrange multipliers to optimally distribute measurements between circuits gives a total number experimental shots of

\begin{equation}
S = \frac{1}{\epsilon^2}\left[\sum_{g\in\mathcal{G}} \sqrt{\sum_{i \in g} |a_i|^2\var{\P_i}}\right]^2
\end{equation}

to achieve an expected measurement error of $\epsilon$, assuming that $\cov[\P_i,\P_j]=0$ if $i\neq j$. The variance of a Pauli operator is given by $\var{\P_i} = 1-\langle \P_i \rangle$. In the absence of knowledge about the state that the expectation value is take with respect to, we take the expectation of this variance with respect to a spherical measure (See~\cite[Chapter~7]{watrous2018theory} and following a similar method to Ref.~\cite{crawford2021efficient}):

\begin{equation}\label{eqn: spherical var}
\mathbb{E}[\var \P_i]=
\begin{cases}0;\quad &\P_i=\I\\
1-\frac{1}{2^{n_i}+1};\quad &\text{otherwise}
\end{cases}
\end{equation}

where $n_i$ is the number of non-identity Pauli operators in $\P_i$. Additionally, in the uniform spherical measure, $\mathbb{E}[\cov(\P_i,\P_j)]=0$ if $i\neq j$.  Thus the expected number of shots $S$ to achieve error $\epsilon$ is upper bounded by

\begin{equation}\label{eqn: general shots}
\mathbb{E}[S] = \frac{1}{\epsilon^2}\left[\sum_{g\in\mathcal{G}} \sqrt{\sum_{i \in g} |a_i|^2\left(1-\frac{1}{2^{n_i}+1}\right)}\right]^2.
\end{equation}

\begin{theorem}[Expected shots over uniform measure when grouping measurements]\label{thrm: scaling shots}
Given a Hamiltonian

\begin{equation}
\H = \sum_{i=1}^N (\x_i^2 + \p_i^2) + \sum_{j>i} \gamma_{i,j} \x_i\x_j,
\end{equation}

for $N\geq2$ $d$-level QDOs in the binary encoding and measuring comuting operators according to the procedure described in Theorem \ref{thrm: scaling}, the expected number of shots $S$ required to achieve a constant measurement error $\epsilon$ is upper bounded by

\begin{widetext}
\begin{equation}
S\leq \frac{1}{\epsilon^2}\left[\frac{\sqrt{2}}{3}\sqrt{N}\sqrt{d^2-1} 
+ \left(\sum_{F\in\partition (G)}\sqrt{\sum_{\langle i,j\rangle\in F}\gamma_{i,j}^2}\right) 
\left(2d-\left[4+2\log_2 d\right]d^2+\left[2+2\log_2 d+\frac{1}{2}(\log_2 d)^2\right]d^3\right)\right]^2,
\end{equation}
\end{widetext}

when taken over the spherically symmetric distribution of states, where $\partition (G)$ is the set of coverings for a given 1-factorisation of the interaction graph $G$. Assuming $G$ is fully connected and that  values of $\gamma_{i,j}$ are independent of both $N$ and $d$ and randomly distributed between different $F\in \partition(G)$, then $S \in O\left(\frac{\gamma^2 N^3(\log_2d)^4d^6}{\epsilon^2}\right)$, where $\gamma$ is a typical scale for $\gamma_{i,j}$.
\end{theorem}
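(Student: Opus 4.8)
The plan is to start from the optimal-allocation shot count \eqref{eqn: general shots} and bound the bracketed sum $\sum_{g\in\mathcal{G}}\sqrt{\sum_{i\in g}|a_i|^2(1-1/(2^{n_i}+1))}$ by splitting the groups $\mathcal{G}$ into the single non-interaction group and the interaction groups produced by Theorem~\ref{thrm: scaling}. These two contributions will reproduce, respectively, the first summand $\tfrac{\sqrt2}{3}\sqrt{N}\sqrt{d^2-1}$ and the product $\bigl(\sum_{F\in\partition(G)}\sqrt{\sum_{\langle i,j\rangle\in F}\gamma_{i,j}^2}\bigr)f(d)$ of the stated bound, after which squaring and dividing by $\epsilon^2$ delivers the result.

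First I would dispatch the non-interaction group. By \eqref{eqn: nonint paulis} every $\x_i^2+\p_i^2$ decomposes into single-qubit $Z$ operators with coefficients $2^j$, all mutually qubit-wise commuting, so they form one group measured by $Z^{\otimes Nm}$. Each such Pauli has weight $n=1$, hence variance factor $1-1/3=2/3$, and summing $|a|^2=2^{2j}$ over the $m=\log_2 d$ qubits of all $N$ oscillators gives $\tfrac{2}{3}N\sum_{j=0}^{m-1}4^j=\tfrac{2}{9}N(d^2-1)$, whose square root is exactly the first summand.

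Next comes the interaction part, where the $d$-dependent polynomial is forged, and this is the main obstacle. Using Theorem~\ref{thrm: scaling}, the interaction groups are indexed by pairs $(F,c)$, where $F$ ranges over the $1$-factors of a Baranyai factorisation of the interaction graph $G$ and $c$ over the $(d-1)^2$ covers of a single $\x\otimes\x$ from Lemma~\ref{lem: coupling groups}. Because the edges in a fixed $F$ act on disjoint qubit registers and reuse the same cover $c$, the $\x\otimes\x$ decomposition is structurally identical on each edge, so the inner sum for group $(F,c)$ factorises as $V_c\sum_{\langle i,j\rangle\in F}\gamma_{i,j}^2$, where $V_c=\sum_{\P\in S_c}|c_\P|^2(1-1/(2^{n_\P}+1))$ collects the squared Pauli coefficients in cover $c$ weighted by their variances. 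Taking square roots and summing over $(F,c)$ then factors the entire interaction contribution into $\bigl(\sum_{F}\sqrt{\sum_{\langle i,j\rangle\in F}\gamma_{i,j}^2}\bigr)\bigl(\sum_c\sqrt{V_c}\bigr)$, so it remains to evaluate or bound $\sum_c\sqrt{V_c}$ by $f(d)$. This is the laborious step: one must read off the explicit coefficients $c_\P$ and weights $n_\P$ from the binary encoding of $\x\otimes\x$ — using the single-qubit identities \eqref{eqn: single qubit ops}, the $\ket{\underline{n+1}}\bra{\underline{n}}$ expansion \eqref{eqn: binary plus pauli ops}, and the cover bookkeeping of Lemma~\ref{lem: coupling groups} (the $2^k$ covers attached to each $k(n)$) — and then compute the cover-wise sum in closed form. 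The square root over weight-dependent variance factors is what makes this unpleasant; I expect to carry it out by organising terms according to $k(n)$ and reducing the resulting sums to geometric and arithmetic power sums in $d$, yielding $f(d)=2d-(4+2\log_2 d)d^2+(2+2\log_2 d+\tfrac12(\log_2 d)^2)d^3$.

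Finally, for the asymptotics I would take $G$ fully connected with $\gamma_{i,j}$ of typical scale $\gamma$, independent of $N$ and $d$. A Baranyai factorisation then has $\Theta(N)$ $1$-factors, each a near-perfect matching of $\Theta(N)$ edges, so $\sqrt{\sum_{\langle i,j\rangle\in F}\gamma_{i,j}^2}=O(\gamma\sqrt{N})$ and hence $\sum_{F\in\partition(G)}\sqrt{\sum_{\langle i,j\rangle\in F}\gamma_{i,j}^2}=O(\gamma N^{3/2})$. Reading the leading term $\tfrac12(\log_2 d)^2 d^3$ off $f(d)$ gives an interaction contribution $O(\gamma N^{3/2}(\log_2 d)^2 d^3)$, which dominates the non-interaction term $O(\sqrt{N}\,d)$; squaring and dividing by $\epsilon^2$ then delivers $S=O\!\left(\gamma^2 N^3(\log_2 d)^4 d^6/\epsilon^2\right)$.
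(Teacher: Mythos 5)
Your proposal follows essentially the same route as the paper: the same split into the single non-interaction group (yielding $\tfrac{\sqrt2}{3}\sqrt{N}\sqrt{d^2-1}$ exactly as in the paper) plus interaction groups indexed by 1-factors and covers, the same factorisation of the edge weights into $\sum_{F}\sqrt{\sum_{\langle i,j\rangle\in F}\gamma_{i,j}^2}$ times a $d$-dependent factor, and the same asymptotic count of $\Theta(N)$ matchings with $\Theta(N)$ edges each. The only part you leave unexecuted is the closed-form evaluation of $\sum_c\sqrt{V_c}$, which the paper carries out exactly as you outline — organising by $k(n_1),k(n_2)$, applying the power-sum identities for $\sum(n_1+1)(n_2+1)$ and the binomial sums, and bounding the spherical-measure variance factor $1-1/(2^{k_1+k_2+\alpha+\beta}+1)$ by $1$ to obtain the stated polynomial in $d$.
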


\begin{proof} We consider the groupings described in Theorem \ref{thrm: scaling}. Since interacting and non-interacting parts of the Hamiltonian have different forms and groupings, we consider each one's contribution to the total shots separately:

\begin{equation}
\epsilon^2 S = \left(\epsilon \tilde{S}_{\text{non-int}}+\epsilon \tilde{S}_{\text{int}}\right)^2.
\end{equation}

Starting with the uncoupled part of the Hamiltonian:
\begin{equation}
\sum_{i=1}^N (\x_i^2 + \p_i^2) = \sum_{i=1}^N 2^m\I^{\otimes mN}-\sum_{j=0}^{m-1} 2^{j}Z^{(im+j)},
\end{equation}
where $Z^{(im+j)}$ is the Z operator acting on the $j$\textsuperscript{th} qubit of QDO $i$, all terms form a single collection which gives
\begin{equation}\label{eqn: non int shots}
\epsilon \tilde{S}_{\text{non-int}} = \sqrt{N\sum_{j=0}^{m-1}2^{2j} \var Z} =\sqrt{\frac{2}{9}N\left(2^{2m}-1\right)},
\end{equation}
where we have used that $\mathbb{E}[\var Z] = \frac{2}{3}$ from (\ref{eqn: spherical var}).

For the interacting terms, the collection to which a single Pauli operators given by (\ref{eqn: binary plus pauli ops}) (i.e., which  $g\in\mathcal{G}$ it is sorted into) is determined by three things:

\begin{enumerate}
\item Which of the $(N-1)$ (even $N$) or $N$ (odd $N$) 1-factorisations of the interaction graph the corresponding interaction term belongs to.
\item The value of $k(n)$, the position of the rightmost zero in the binary representation of integer $n$ labelling the Fock state, for \emph{each of the two} the QDOs involved in an interaction term.
\item For a given pair of $k(n)$, which of the possible $2^{k+1}$ of $X$ and $Y$ operators is applied to the right most $(k+1)$ qubits.
\end{enumerate}
As before, the anti-Hermitian terms in (\ref{eqn: binary plus pauli ops}) are not needed in the sum $\ket{\underline{n+1}}\bra{\underline{n}} + \ket{\underline{n}}\bra{\underline{n+1}}$.

Separating the terms within $\H$ according to these criteria (their contributions to the equation labelled with curly braces), the expected number of shots required is 

\begin{equation}
\begin{split}
\epsilon \tilde{S}_{\text{int}} = \frac{1}{2}
&\underbrace{\sum_{k_1,k_2=0}^{m-1}}
_{1.}\;
\underbrace{\sum_{F\in\partition (G)}}
_{2.}\; \\
&\underbrace{\sum_{\substack{\P_1^{(:k_1+1)}\in\set{X,Y}^{\otimes(k_1+1)}\\ \P_2^{(:k_2+1)}\in\set{X,Y}^{\otimes(k_2+1)}}}}
_{3.}
\Gamma(F,k_1,k_2),
\end{split}
\end{equation}
where $\partition (G)$ is the set of the interaction graph $G$. From Lemma~\ref{thrm: baranyai}, if $G$ is fully connected then $|G| = (N-1)$ or $N$ (for $N$ even or odd). We have defined $\mathcal{N}_{k_1}=\set{n|k(n)=k_1, n\in \mathbb{Z}, 0\leq n <2^m}$ as the set of integers satisfying a given $k(n)$ (similarly for $\mathcal{N}_{k_2}$) and $\P_1^{(:k_1+1)}$ is a $(k_1+1)$ qubit Pauli operator (similarly for $\P_2^{(:k_2+1)}$). The $1/2$ prefactor results from the cancellation of anti-hermitian terms which make up half of the operators. The value of $\Gamma$ is given by
\begin{align}
% \begin{split}
\Gamma(F,k_1,k_2)=&\Bigg[\sum_{\langle i,j \rangle \in F}
\sum_{\substack{n_1\in\mathcal{N}_{k_1}\\ n_2\in\mathcal{N}_{k_2}}}\;
\sum_{\substack{\P^{(k_1+1:m)}_1\in\set{\I,Z}^{\otimes(m-k_1-1)}\\ \P_2^{(k_2+1:m)}\in\set{\I,Z}^{\otimes(m-k_2-1)}}} \nonumber\\
&\left(2\gamma_{i,j}\sqrt{n_1+1}\sqrt{n_2+1}\right)^2 \mathbb{E}[\var{\P_1\otimes \P_2}]
\Bigg]^{\frac{1}{2}},
% \end{split}
\end{align}

where, $\P^{(k_1+1:m)}_1$ is an $(m-(k_1+1))$ qubit Pauli operator and $\P_1 = \P_1^{(:k_1+1)}\otimes\P^{(k_1+1:m)}_1$ (similarly for $\P^{(k_2+1:m)}_2$). This is the equivalent of the square root term in \eqref{eqn: general shots}; the contribution to the variance from a single collection of commuting operators. In the above, $\langle i,j \rangle$ correspond to edges within 1-factorisation $F$, $n_1$ and $n_2$ correspond to Fock terms satisfying the requirement of $k(n)$; the position of the right-most X or Y operator. Using the expectation on the spherical measure from \eqref{eqn: spherical var} gives
\begin{equation}
\mathbb{E}[\var{\P_1\otimes \P_2}] = \left(1-\frac{1}{2^{k_1+k_2+\alpha+\beta}+1}\right)
\end{equation}
where integers $0<\alpha_1<m-k_1$ and $0<\alpha_2<m-k_2$ count the number of Z operators in $\P_1^{(k_1+1:m)}$ and $\P_2^{(k_2+1:m)}$ respectively.

By recognising that when using the spherical measure, many of the expected variances are equivalent and only depend on the total number of X,Y or Z operators, the expected number of shots simplifies to

\begin{equation}
\epsilon \tilde{S}_{\text{int}} = \frac{1}{2}
\sum_{k_1,k_2=0}^{m-1}\;
\sum_{F\in\partition (K_N^2)}\;2^{k_1+1}2^{k_2+1}
\Gamma(F,k_1,k_2)
\end{equation}
with 
\begin{equation}
\begin{split}
\Gamma(F,k_1,k_2)= \Bigg[&\sum_{\langle i,j \rangle \in F}
\sum_{\substack{n_1\in\mathcal{N}_{k_1}\\ n_2\in\mathcal{N}_{k_2}}}\;
\sum_{\alpha=0}^{m-k_1-1}\sum_{\beta=0}^{m-k_2-1}\;\\
&{m-k_1-1 \choose \alpha} {m-k_2-1 \choose \beta}\\
&\times\left(2\gamma_{i,j}\sqrt{n_1+1}\sqrt{n_2+1}\right)^2\\
&\times\left(1-\frac{1}{2^{k_1+k_2+\alpha+\beta}+1}\right)
\Bigg]^{\frac{1}{2}}.
\end{split}
\end{equation}

Now we aim to find a simple upper bound for $S$. We start by grouping terms

\begin{widetext}
\begin{align}\label{eqn: shots}
% \begin{split}
\epsilon \tilde{S}_{\text{int}} = \left(\sum_{F\in\partition (G)}\sqrt{\sum_{\langle i,j\rangle\in F}\gamma_{i,j}^2}\right)
\sum_{k_1,k_2=1}^m 
& 2^{k_1+1}2^{k_2+1}
\sqrt{\sum_{\substack{n_1\in\mathcal{N}_{k_1}\\ n_2\in\mathcal{N}_{k_2}}}(n_1+1)(n_2+1)}\nonumber\\
&\sqrt{ \sum_{\alpha=0}^{m-k_1-1}\sum_{\beta=0}^{m-k_2-1}\; {m-k_1-1 \choose \alpha} {m-k_2-1 \choose \beta} \left(1-\frac{1}{2^{k_1+k_2+\alpha+\beta}+1}\right)}.
% \end{split}
\end{align}
\end{widetext}
We note that

\begin{equation}\label{eqn: simp 2}
\begin{split}
&\sum_{\substack{n_1\in\mathcal{N}_{k_1}\\ n_2\in\mathcal{N}_{k_2}}}(n_1+1)(n_2+1)\\
&=\left(2^{k_1}\sum_{\gamma=0}^{2^{m-k_1-1}}(2\gamma+1)\right)
\left(2^{k_2}\sum_{\delta=0}^{2^{m-k_2-1}}(2\delta+1)\right)\\
&=2^{-k_1-2}(2^{k_1+1}+2^m)^2 \cdot 2^{-k_2-2}(2^{k_2+1}+2^m)^2
\end{split}
\end{equation}

and

\begin{equation}\label{eqn: simp 1}
\begin{split}
&\sum_{\alpha=0}^{m-k_1-1}\sum_{\beta=0}^{m-k_2-1}\; {m-k_1-1 \choose \alpha} {m-k_2-1 \choose \beta}\\
&\hphantom{\sum_{\alpha=0}^{m-k_1-1}\sum_{\beta=0}^{m-k_2-1}\;}\times\left(1-\frac{1}{2^{k_1+k_2+\alpha+\beta}+1}\right)\\
&\leq \sum_{\alpha=0}^{m-k_1-1} {m-k_1-1 \choose \alpha} 
\sum_{\beta=0}^{m-k_2-1} {m-k_1-1 \choose \beta} \\
&\hphantom{\sum \leq}= 2^{m-k_1-1}\cdot2^{m-k_2-1},
\end{split}
\end{equation}

where we have used a standard result for the summation of binomial coefficients. Substituting \eqref{eqn: simp 2} and \eqref{eqn: simp 1} into \eqref{eqn: shots} gives

\begin{widetext}
\begin{equation}
\begin{split}
\epsilon S_{\text{int}} \leq& \left(\sum_{F\in\partition (G)}\sqrt{\sum_{\langle i,j\rangle\in F}\gamma_{i,j}^2}\right)
\sum_{k_1,k_2=0}^{m-1} 2^{m-1}\left(4\cdot 2^{k_1+k_2} +2^{m+1}\cdot 2^{k_1} + 2^{m+1}\cdot2^{k_2} + 2^{2m}\right)\\
&=\left(\sum_{F\in(G)}\sqrt{\sum_{\langle i,j\rangle\in F}\gamma_{i,j}^2}\right) 
\left(2d-\left[4+2\log_2 d\right]d^2+\left[2+2\log d +\frac{1}{2}(\log_2 d)^2\right]d^3\right),
\end{split}
\end{equation}
\end{widetext}

using standard results for the summations and substitution $d=2^m$. Combining this with (\ref{eqn: non int shots}) gives the result.
\end{proof}

\begin{theorem}[Expected shots over uniform measure when not grouping measurements.]\label{thrm: scaling ungrouped shots}
Given a Hamiltonian

\begin{equation}
\H = \sum_{i=1}^N (\x_i^2 + \p_i^2) + \sum_{j>i} \gamma_{i,j} \x_i\x_j,
\end{equation}

for $N\geq2$ $d$-level QDOs in the binary encoding and measuring each corresponding Pauli operator separately, the expected number of shots required to achieve a constant measurement error $\epsilon$ is given by the upper bound

\begin{widetext}
\begin{equation}
\begin{split}
S\leq \frac{1}{\epsilon^2}\Bigg[&N\frac{\sqrt{6}}{3}(d-1) \\
&+ 
\left(\sum_{\langle i,j \rangle \in G}|\gamma_{i,j}|\right)
\left(
\left[3+2\sqrt{2}\right]d^2
- \left[2+\sqrt{2}\right]d^{\frac{5}{2}}
- \left[\frac{7}{2} + 3\sqrt{2}\right]d^3
+ \left[1+\sqrt{2}\right]d^{\frac{7}{2}}
+ \left[\frac{3}{2} + \sqrt{2}\right]d^4
\right)\Bigg]^2,
\end{split}
\end{equation}
\end{widetext}
when taken over the spherically symmetric distribution of states, where $G$ is the interaction graph. Assuming $G$ is fully connected and, that  values of $\gamma_{i,j}$ are independent of both $N$ and $d$, then $S \in O\left(\frac{\gamma^2 N^4d^{8}}{\epsilon^2}\right)$ where $\gamma$ is a typical scale for $\gamma_{i,j}$.
\end{theorem}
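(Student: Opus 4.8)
The plan is to specialize the general shot estimate \eqref{eqn: general shots} to the ungrouped case, in which every non-identity Pauli operator $\P$ of $\H$ forms its own singleton collection. The inner square root then acts on a single term, so the estimate collapses to $\mathbb{E}[S]=\epsilon^{-2}\big(\sum_{\P}|a_\P|\sqrt{\mathbb{E}[\var\P]}\big)^2$, with $\mathbb{E}[\var\P]$ read off from \eqref{eqn: spherical var}. I would then split the inner sum additively into non-interacting and interacting contributions, writing $\epsilon\sqrt{\mathbb{E}[S]}=\tilde S_{\text{non-int}}+\tilde S_{\text{int}}$ so that $S\le\epsilon^{-2}(\tilde S_{\text{non-int}}+\tilde S_{\text{int}})^2$, and bound each piece in turn.

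For the non-interacting term I would use the decomposition $\x_i^2+\p_i^2=2^m\I^{\otimes mN}-\sum_{j=0}^{m-1}2^{j}Z^{(im+j)}$ exactly as in Theorem~\ref{thrm: scaling shots}. Every non-identity operator is a single-qubit $Z$, for which \eqref{eqn: spherical var} gives $\mathbb{E}[\var Z]=2/3$, so the magnitudes sum geometrically to $\tilde S_{\text{non-int}}=\sqrt{2/3}\,N\sum_{j=0}^{m-1}2^{j}=\tfrac{\sqrt6}{3}N(d-1)$, reproducing the first bracketed term.

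The interacting term is the substantive part. I would reuse the enumeration of the $\x_i\x_j$ Pauli strings from Theorem~\ref{thrm: scaling shots}: each string is labelled by the pair of rightmost-zero positions $(k_1,k_2)$, by a choice of $X/Y$ operators on the lowest $k_1{+}1$ and $k_2{+}1$ qubits of the two registers, and by an $\I/Z$ pattern on the remaining high qubits (counted by binomial coefficients), with the Fock weight collected from the sums over $\mathcal{N}_{k_1},\mathcal{N}_{k_2}$. The key structural difference from the grouped proof is that here the variance square root sits on each individual summand rather than outside a group sum; consequently the Fock weights enter linearly as $\sqrt{n+1}$ rather than as $(n+1)$ under an outer root. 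After inserting $\mathbb{E}[\var\P]=1-(2^{n_\P}+1)^{-1}$, with $n_\P$ the total number of non-identity factors, and bounding this variance above by $1$, the $\I/Z$ pattern sums collapse via $\sum_\alpha\binom{m-k_1-1}{\alpha}=2^{m-k_1-1}$ and the whole expression reduces to a product of two single-register sums of the schematic form $\sum_{k}2^{k}\big(\sum_{n\in\mathcal{N}_{k}}\sqrt{n+1}\big)2^{m-k-1}$, weighted by $\sum_{\langle i,j\rangle\in G}|\gamma_{i,j}|$.

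The hardest step is evaluating $\sum_{n\in\mathcal{N}_{k}}\sqrt{n+1}$ and the ensuing sums over $k_1,k_2$: unlike the grouped case the square root does not rationalise, and one must factor out $2^{k/2}$ (writing $n+1=2^{k}(2h+1)$) and control $\sum_h\sqrt{2h+1}$, for instance via $\sqrt{2h+1}\le\sqrt2\sqrt{h+1}$ together with an integral comparison for $\sum\sqrt{h+1}$. The residual $2^{k/2}$ factors turn the outer sums over $k_1,k_2$ into geometric series of ratio $\sqrt2$, and the identity $\sum_{k=0}^{m-1}2^{k/2}=(\sqrt2+1)(\sqrt d-1)$ is precisely what generates the $\sqrt2$ coefficients and the half-integer powers $d^{5/2},d^{7/2}$ appearing in the stated bracket. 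Collecting the integer- and half-integer-power contributions yields the quoted degree-four polynomial in $d$; multiplying by $\sum_{\langle i,j\rangle\in G}|\gamma_{i,j}|$ and squaring, the asymptotic claim $O(\gamma^2 N^4 d^8/\epsilon^2)$ follows from $\sum_{\langle i,j\rangle\in G}|\gamma_{i,j}|=O(\gamma N^2)$ for a fully connected interaction graph together with dominance of the $d^4$ term. I expect the careful bookkeeping of these $\sqrt2$-ratio geometric sums — retaining every subleading term so as to land the exact rational-plus-$\sqrt2$ coefficients rather than merely the leading order — to be the main obstacle.
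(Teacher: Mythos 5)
Your setup for the non-interacting term and the overall skeleton (split into non-interacting and interacting pieces, enumerate the $\x_i\x_j$ Pauli strings by $(k_1,k_2)$, the $X/Y$ pattern on the low qubits and the $\I/Z$ pattern on the high qubits, bound the variance by $1$ and collapse the binomial sums) all match the paper. However, there is a genuine gap in how you handle the Fock-index sum, and it is fatal to recovering the stated bound. You treat each occurrence $(\P, n_1, n_2)$ as its own singleton collection, so the weights enter as $\sum_{n\in\mathcal{N}_k}\sqrt{n+1}$. The paper instead keeps, for each \emph{distinct} Pauli string, all of the Fock contributions $n\in\mathcal{N}_k$ that feed that same string inside a single square root (a given Pauli string is only measured once; the different $n$ contribute $\pm\sqrt{n+1}$ to its single coefficient, and under the paper's standing assumption of vanishing cross-covariances this gives $\sqrt{\textstyle\sum_{n\in\mathcal{N}_k}(n+1)}$). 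These two quantities are not interchangeable: writing $n+1=2^k(2h+1)$, one finds $\sqrt{\sum_{n\in\mathcal{N}_k}(n+1)}=2^{-k/2-1}(2^{k+1}+2^m)=O(2^{m-k/2})$, whereas $\sum_{n\in\mathcal{N}_k}\sqrt{n+1}=O(2^{3m/2-k})$, larger by a factor of order $\sqrt{|\mathcal{N}_k|}=2^{(m-k-1)/2}$ (this is exactly the Cauchy--Schwarz gap). Propagating your version through the $k_1,k_2$ sums gives a per-register contribution of $O(d^{5/2})$ rather than the paper's $O(d^2)$, hence $O(d^{10}/\epsilon^2)$ overall instead of the claimed $O(d^{8}/\epsilon^2)$; no amount of careful bookkeeping of the $\sqrt2$-ratio geometric series will close this, since $\sum_n\sqrt{n+1}\geq\sqrt{\sum_n(n+1)}$ with a polynomial-in-$d$ gap.

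Concretely, the half-integer powers $d^{5/2},d^{7/2}$ and the $\sqrt2$ coefficients in the stated polynomial do not come from estimating $\sum_h\sqrt{2h+1}$ by an integral, as you propose; they come from the exact identity $\sum_{n\in\mathcal{N}_k}(n+1)=2^{-k-2}(2^{k+1}+2^m)^2$, whose square root splits into a $2^{k/2}$ piece and a $2^{m-k/2}$ piece, after which $\sum_{k=0}^{m-1}2^{\pm k/2}$ produces the $(\sqrt2\pm1)$-type factors and squaring the per-register sum yields exactly $(\tfrac32+\sqrt2)d^4+(1+\sqrt2)d^{7/2}-\dots$. To repair your argument, keep the sum over $n_1,n_2$ inside the square root attached to each distinct Pauli string (as in the paper's Theorem~\ref{thrm: scaling shots}, from which this proof reuses \eqref{eqn: simp 2} and \eqref{eqn: simp 1}); the rest of your outline then goes through. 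Your non-interacting term $\tfrac{\sqrt6}{3}N(d-1)$ and the $O(N^4)$ scaling from $\sum_{\langle i,j\rangle\in G}|\gamma_{i,j}|=O(\gamma N^2)$ are correct as written.
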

\begin{proof}
As in Theorem \ref{thrm: scaling shots}, we consider contributions from non-interaction and interaction terms separately. Starting with the uncoupled part of the Hamiltonian

\begin{equation}\label{eqn: ungrouped non-int shots}
\epsilon \tilde{S}_{\text{non-int}} = N\sum_{j=0}^{m-1}\sqrt{2^{2j} \var Z} = N\frac{\sqrt{6}}{3}(2^m-1).
\end{equation}

For the interaction part of the Hamiltonian
\begin{widetext}
\begin{equation}
\begin{split}
\epsilon \tilde{S}_{\text{int}} 
&= \frac{1}{2}\sum_{\langle i,j \rangle \in G}
\sum_{k_1,k_2=0}^{m-1}\;
\sum_{\substack{\P_1 \in \set{X,Y}^{\otimes k_1+1}\otimes\set{\I,Z}^{\otimes{m-k_1-1}}\\\P_2 \in \set{X,Y}^{\otimes k_2+1}\otimes\set{\I,Z}^{\otimes{m-k_2-1}}}}\;
\left[\sum_{\substack{n_1\in\mathcal{N}_{k_1}\\ n_2\in\mathcal{N}_{k_2}}} 4\gamma_{i,j}^2(n_1+1)(n_2+1) \mathbb{E}(\var \P_1\otimes\P_2)
\right]^{\frac{1}{2}}\\
&=\sum_{\langle i,j \rangle \in G}
|\gamma_{i,j}|
\sum_{k_1,k_2=0}^{m-1}\;
2^{k_1+1}2^{k_2+1}
\sqrt{\sum_{\substack{n_1\in\mathcal{N}_{k_1}\\ n_2\in\mathcal{N}_{k_2}}}(n_1+1)(n_2+1)}\\
&\hphantom{
=\sum_{\langle i,j \rangle \in G}
|\gamma_{i,j}|\sum_{k_1,k_2=0}^{m-1}\;
}
\sum_{\alpha=0}^{m-k_1-1}\sum_{\beta=0}^{m-k_2-1}\; {m-k_1-1 \choose \alpha} {m-k_2-1 \choose \beta}
\sqrt{\left(1-\frac{1}{2^{k_1+k_2+\alpha+\beta}+1}\right)}.
\end{split}
\end{equation}

Using (\ref{eqn: simp 1}) and (\ref{eqn: simp 2}) again and writing $d=2^m$, we get the inequality

\begin{equation}
\begin{split}
\epsilon \tilde{S}_\text{int} \leq &
\sum_{\langle i,j \rangle \in G}
|\gamma_{i,j}|
\sum_{k_1,k_2=0}^{m-1}\;
\left(2^{k_1+1}\cdot 2^{k_2+1}\right) \left(2^{m-k_1-1}\cdot 2^{m-k_2-1}\right)
\left(2^{-\frac{k_1}{2}-1}(2^{k_1+1}+2^m)\cdot2^{-\frac{k_2}{2}-1}(2^{k_2+1}+2^m)\right)\\
&=
\left(\sum_{\langle i,j \rangle \in G}|\gamma_{i,j}|\right)
\left(
\left[3+2\sqrt{2}\right]d^2
- \left[2+\sqrt{2}\right]d^{\frac{5}{2}}
- \left[\frac{7}{2} + 3\sqrt{2}\right]d^3
+ \left[1+\sqrt{2}\right]d^{\frac{7}{2}}
+ \left[\frac{3}{2} + \sqrt{2}\right]d^4
\right).
\end{split}
\end{equation}
\end{widetext}
Combining this with (\ref{eqn: ungrouped non-int shots}) gives the result.
\end{proof}

The preceding results assume no knowledge about the states that are being measured and use an expectation over all of Hilbert space respect to the spherically symmetric measure. Here we make a second estimate for the number of shots required using the pure state $\ket{\psi_0}=\ket{0}^{\otimes mN}$. $\ket{\psi_0}$ corresponds to the product state of all QDOs in their lowest Fock state and is assumed to be reasonably close to the states that we seek using the VQE procedure.

When not using the spherical measure, contributions to the measurement error from the covarience of terms within the same collection is non-zero may be non-zero. In this case, the expected number of shots is given by

\begin{equation}
S = \frac{1}{\epsilon^2}\left[\sum_{g\in\mathcal{G}} \sqrt{\sum_{i,j \in g} a_i a_j\cov[\P_i, \P_j]}\right]^2.
\end{equation}

where $\cov[\P_i,\P_j] = \langle \P_i\P_j \rangle_{\ket{\psi_0}} - \langle \P_i \rangle_{\ket{\psi_0}} \langle\P_j \rangle_{\ket{\psi_0}}$ with subscripts indicating the state that expectation values are taken with respect to.

%%%%%%%%%%%%%%%%%%%%%%%%%%%%%%%%

\begin{theorem} [Expected shots for uncoupled state when grouping measurements.] \label{thrm:scaling shots zero}
Given a Hamiltonian
\begin{equation}
\H = \sum_{i=1}^N (\x_i^2 + \p_i^2) + \sum_{j>i} \gamma_{i,j} \x_i\x_j,
\end{equation}
for $N\geq2$ $d$-level QDOs in the binary encoding and measuring commuting operators according to the procedure described in Theorem \ref{thrm: scaling}, the expected number of shots $S$ required to achieve a constant measurement error $\epsilon$ for state $\ket{0}^{\otimes mN}$ is upper bounded by
\begin{equation}
S=\frac{1}{\epsilon^2}\frac{1}{16}
\left(\sum_{F\in\partition (G)}\;
\sqrt{\sum_{\langle i,j \rangle \in F}
\gamma_{i,j}^2} \right)^2d^4,
\end{equation}
where $\partition (G)$ is the set of coverings for a given 1-factorisation of the interaction graph $G$. Assuming $G$ is fully connected and that  values of $\gamma_{i,j}$ are independent of both $N$ and $d$ and randomly distributed between different $F\in \partition(G)$, then $S \in O\left(\frac{\gamma^2 N^3 d^4}{\epsilon^2}\right)$, where $\gamma$ is a typical scale for $\gamma_{i,j}$.\end{theorem}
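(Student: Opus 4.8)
The plan is to follow the organisational structure of Theorem~\ref{thrm: scaling shots}, but to replace the spherical average of the per-group variance by its \emph{exact} value on the product state $\ket{\psi_0}=\ket{0}^{\otimes mN}$, exploiting that this state is an eigenstate of the non-interacting part. Concretely, I would start from the covariance form of the optimally-distributed shot count, $S=\frac{1}{\epsilon^2}\big[\sum_{g\in\mathcal{G}}\sqrt{\sum_{i,j\in g}a_ia_j\cov[\P_i,\P_j]}\big]^2$, and record two simplifications specific to $\ket{\psi_0}$. First, since $\x^2+\p^2$ is diagonal with $\ket{\psi_0}$ its ground eigenstate, every $Z$-string in \eqref{eqn: nonint paulis} is measured deterministically, so the non-interacting group has zero variance and contributes nothing; this is exactly why, unlike Theorem~\ref{thrm: scaling shots}, no $\sqrt{N}$ term survives. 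Second, every Pauli appearing in $\x_i\x_j$ carries at least one $X$ or $Y$ (the factor $(X\pm iY)$ in \eqref{eqn: binary plus pauli ops}), so $\langle\P\rangle_{\psi_0}=0$; hence $\cov[\P_i,\P_j]=\langle\P_i\P_j\rangle_{\psi_0}$ and the variance of each group collapses to $\langle\hat{O}_g^2\rangle=\lVert\hat{O}_g\ket{\psi_0}\rVert^2$ with $\hat{O}_g=\sum_{i\in g}a_i\P_i$.

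Next I would invoke the grouping of Theorem~\ref{thrm: scaling}: a group is labelled by a $1$-factorisation $F$ of the interaction graph together with one of the $(d-1)^2$ covers $c$ of Lemma~\ref{lem: coupling groups}, so $\hat{O}_{F,c}=\sum_{\langle i,j\rangle\in F}\gamma_{i,j}\hat{A}^{(c)}_{i,j}$ with $\hat{A}^{(c)}_{i,j}$ the cover-$c$ part of $\x_i\x_j$. Because the edges of a $1$-factorisation are disjoint, the $\hat{A}^{(c)}_{i,j}$ act on disjoint registers and have vanishing mean, so every cross-edge term drops and $\lVert\hat{O}_{F,c}\ket{\psi_0}\rVert^2=\sum_{\langle i,j\rangle\in F}\gamma_{i,j}^2\,v_c$ for a single edge-independent weight $v_c$. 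Since each $\x\otimes\x$ cover factorises as $c=(c_1,c_2)$ across the two oscillator registers, $\hat{A}^{(c)}_{i,j}=\hat{a}^{(c_1)}\otimes\hat{a}^{(c_2)}$ and therefore $v_c=w_{c_1}w_{c_2}$ with $w_{c_\ell}=\lVert\hat{a}^{(c_\ell)}\ket{\underline{0}}\rVert^2$. This reduces the whole estimate to a single-oscillator sum through $\sum_{c}\sqrt{v_c}=\big(\sum_{c_1}\sqrt{w_{c_1}}\big)^2$.

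The heart of the argument is evaluating $\sum_{c_1}\sqrt{w_{c_1}}$, for which I would reuse the Pauli structure of Theorem~\ref{thrm: ungrouped circuits}: the covers of $\x$ split by the right-most-zero position $k$, with $2^k$ covers for each $k\in\{0,\dots,m-1\}$. Acting with $\hat{a}^{(c)}$ on $\ket{\underline{0}}$, the high-qubit factors $(\I+(-1)^{\bin(n)_i}Z)$ annihilate every contribution except the single surviving Fock transition $n=2^k-1$ (all high bits zero), while the low-qubit $X/Y$ string flips the bottom $k+1$ qubits, sending $\ket{\underline{0}}$ to a phase times $\ket{\underline{2^{k+1}-1}}$. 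Collecting the binary-encoding prefactor, the Fock amplitude $\sqrt{n+1}=2^{k/2}$, and the high-bit survival factor $2^{m-k-1}$ gives a clean weight $w_c=2^{-(k+1)}$ (up to the overall $\x$ normalisation), so that $\sum_{c_1}\sqrt{w_{c_1}}=\tfrac{1}{\sqrt2}\sum_{k=0}^{m-1}(\sqrt2)^{k}$ is a geometric sum that I would bound above by $d/2$. Squaring and reassembling the two sums yields $\epsilon\sqrt{S}\le\frac{d^2}{4}\sum_{F\in\partition(G)}\sqrt{\sum_{\langle i,j\rangle\in F}\gamma_{i,j}^2}$, which is the stated bound.

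For the asymptotics I would invoke Baranyai's theorem (Lemma~\ref{thrm: baranyai}): a fully connected $G$ has $O(N)$ factorisations, each a matching of $O(N)$ edges, so $\sqrt{\sum_{\langle i,j\rangle\in F}\gamma_{i,j}^2}=O(\gamma\sqrt{N})$ and $\sum_F(\cdots)=O(\gamma N^{3/2})$; squaring produces the $O(\gamma^2N^3d^4/\epsilon^2)$ scaling. I expect the main obstacle to be the single-oscillator computation in the third step: one must track precisely which Pauli terms in each cover survive on the product ground state and with what amplitude, since the spurious higher-Fock contributions cancel only after summing within a cover, and obtaining the per-cover weight $w_c$ correctly is the delicate bookkeeping on which the $d^4$ factor ultimately rests.
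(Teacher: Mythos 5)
Your proof is correct and follows the same skeleton as the paper's: the covariance form of the shot count, the vanishing of the non-interacting contribution on $\ket{0}^{\otimes mN}$, the vanishing of single-Pauli means so that only same-edge products survive, factorisation over the disjoint edges of each $1$-factorisation, and Baranyai's theorem for the $O(N)\times O(N)$ edge count giving $O(\gamma^2 N^3 d^4/\epsilon^2)$. The one place you genuinely diverge is the per-cover variance on $\ket{\underline{0}}$: the paper argues that the high-bit projectors force $n_1=n_1'=0$ and hence that only the $k_1=k_2=0$ covers contribute, evaluating that single contribution exactly as $\tfrac{1}{16}d^4$; you instead (correctly) observe that for each $k$ the transition $n=2^k-1$ survives, so \emph{every} cover carries weight $w_c=2^{-(k+1)}$, and you recover the stated bound by controlling the geometric sum $\sum_c\sqrt{w_c}=\tfrac{1}{\sqrt{2}}\sum_{k=0}^{m-1}2^{k/2}\le d/2$ (which indeed holds for all $d$, since $u^2-2(\sqrt{2}+1)(u-1)>0$ for $u=\sqrt{d}$). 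Your accounting is the more careful one here --- an explicit check at $d=4$ gives single-oscillator cover weights $\tfrac12,\tfrac14,\tfrac14$ for the $k=0$, $XX$ and $YY$ covers respectively, matching your $2^{-(k+1)}$ and showing the $k=1$ covers do contribute --- and it buys the additional observation that the exact sum scales as $O(\sqrt{d})$ rather than $d/2$, so the true $d$-dependence of $S$ is $O(d^2)$; the $d^4/16$ in the theorem is a valid but loose upper bound, which is all that is claimed.
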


\begin{proof}
Again we consider non-interacting and interacting contributions separately. For the non interacting contribution, we note that for $\ket{\psi}=\ket{0}^{\otimes mN}$, $\langle Z^{(i)} \rangle = \langle Z^{(j)} \rangle =\langle Z^{(i)}Z^{(j)}\rangle= 1 \; \forall \; i,j$. Therefore, 

\begin{equation}
\epsilon \tilde S_{\text{non-int}} = \sqrt{N\sum_{i,j=0}^{m-1}2^i2^j\cov[Z^{(i)},Z^{(j)}]}=0.\label{eqn: grouped zero state nonint shots}
\end{equation}

The contribution from the coupling terms is
\begin{equation}
\begin{split}
\epsilon \tilde{S}_{\text{int}} = &\frac{1}{2}
\sum_{k_1,k_2=0}^{m-1}\;
\sum_{F\in\partition (K_N^2)}\\
&\sum_{\substack{\P_1^{(:k_1)}\in\set{X,Y}^{\otimes(k_1+1)}\\ \P_2^{(:k_2)}\in\set{X,Y}^{\otimes(k_2+1)}}}
\Gamma(F,k_1,k_2),\label{eqn: grouped zero state int shots}
\end{split}
\end{equation}
where $\partition (K_N^2)$ is the set of the $(N-1)$ or $N$ (for $N$ even or odd) partitions of the interaction graph $K_N^2$, $\mathcal{N}_{k_1}=\set{n|k(n)=k_1, n\in \mathbb{Z}, 0\leq n <2^m}$ is the set of integers satisfying a given $k(n)$ (similarly for $\mathcal{N}_{k_2}$) and $\P_1^{(:k_1+1)}$ is a $(k_1+1)$ qubit Pauli operator (similarly for $\P_2^{(:k_2+1)}$). $\Gamma$ is given by
\begin{equation}\label{eqn: grouped zero state gamma}
\begin{split}
&\Gamma(G,k_1,k_2)\\&=
\Bigg[\sum_{\langle i,j \rangle \in F}
\sum_{\langle i',j' \rangle \in F}
\sum_{\substack{n_1\in\mathcal{N}_{k_1}\\ n_2\in\mathcal{N}_{k_2}}}\;
\sum_{\substack{n_1'\in\mathcal{N}_{k_1}\\ n_2'\in\mathcal{N}_{k_2}}}\\
&\hphantom{=}2^2\gamma_{i,j}\gamma_{i',j'}\sqrt{n_1+1}\sqrt{n_2+1}\sqrt{n_1'+1}\sqrt{n_2'+1}\\
&\hphantom{=}\times\cov\left[\hat{A}_{\langle i,j \rangle}(k_1,k_2,n_1,n_2),\hat{A}_{\langle i',j' \rangle}(k_1,k_2,n_1',n_2')\right]
\Bigg]^{\frac{1}{2}}
\end{split}
\end{equation}
where
\begin{equation}
\begin{split}
&\hat{A}_{\langle i,j \rangle}(k_1,k_2,n_1,n_2) \\
&= \left(\P_1^{(:k_1)}
\otimes \bigotimes_{\alpha=k_1+1}^{m-1}(\I+(-1)^{\bin(n_1)_\alpha}Z)
\right)_{\text{QDO } i}\\
&\hphantom{=}\otimes\left(\P_2^{(:k_2)}
\otimes \bigotimes_{\beta=k_2+1}^{m-1}
(\I+(-1)^{\bin(n_2)_\beta}Z)
\right)_{\text{QDO } j}
\end{split}
\end{equation}
acts on the sub registers corresponding to QDOs $i$ and $j$. 

We note that $\P_1^{(:k_1)}\in \set{\pm1}\times\set{X,Y}^{\otimes(k_1+1)}$ (recalling only Hermitian terms remain in $\H$) and so $\langle \P_1^{(:k_1)} \rangle_{\ket{\psi_0}}=0\; \forall k_1$ and similarly for $\P_2^{(:k_2)}$, from which it follows 
\begin{equation}
\begin{split}
\langle i&,j \rangle \neq \langle i',j' \rangle
\implies \\
&\cov\left[\hat{A}_{\langle i,j \rangle}(k_1,k_2,n_1,n_2),\hat{A}_{\langle i',j' \rangle}(k_1,k_2,n_1',n_2')\right]=0
\end{split}
\end{equation}
and 
\begin{equation}
\begin{split}
&\cov\left[\hat{A}_{\langle i,j \rangle}(k_1,k_2,n_1,n_2),\hat{A}_{\langle i,j \rangle}(k_1,k_2,n_1',n_2')\right]\\
&=\left\langle \hat{A}_{\langle i,j \rangle}(k_1,k_2,n_1,n_2)\hat{A}_{\langle i,j \rangle}(k_1,k_2,n_1',n_2') \right\rangle_{\ket{\psi_0}}\\
&=\left\langle
\P_1^{(:k_1)}\P_1^{(:k_1)}
\right\rangle_{\ket{\psi_0}}\left\langle
\P_2^{(:k_2)}\P_2^{(:k_2)}\right\rangle_{\ket{\psi_0}}\times\\
&\hphantom{=}\left\langle
\bigotimes_{\alpha=k_1+1}^{m-1}(\I+(-1)^{\bin(n_1)_\alpha}Z)(\I+(-1)^{\bin(n_1')_\alpha}Z)
\right\rangle_{\ket{\psi_0}}\\
&\hphantom{=}\left\langle
\bigotimes_{\alpha=k_1+1}^{m-1}(\I+(-1)^{\bin(n_2)_\alpha}Z)(\I+(-1)^{\bin(n_2')_\alpha}Z)
\right\rangle_{\ket{\psi_0}},
\end{split}
\end{equation}
where all expectation values are with respect to $\ket{\psi_0}$. Trivially $\langle\P_1^{(:k_1)}\P_1^{(:k_1)}\rangle_{\ket{\psi_0}}=\langle\P_2^{(:k_2)}\P_2^{(:k_2)}\rangle_{\ket{\psi_0}}=1$. It can be seen that
\begin{equation}
\begin{split}
&\left\langle
\bigotimes_{\alpha=k_1+1}^{m-1}(\I+(-1)^{\bin(n_1)_\alpha}Z)(\I+(-1)^{\bin(n_1')_\alpha}Z)
\right\rangle_{\ket{\psi_0}} \\
&\qquad = \;
\begin{cases}
4^{m-k_1-1}, & \mbox{if } n_1=n_1'=0,\\
0,& \mbox{otherwise},
\end{cases}
\end{split}
\end{equation}
and similarly for $n_2$ and $n_2'$. Therefore
\begin{equation}\label{eqn: covariance zero state subterms}
\begin{split}
&\cov\left[\hat{A}_{\langle i,j \rangle}(k_1,k_2,n_1,n_2),\hat{A}_{\langle i',j' \rangle}(k_1,k_2,n_1',n_2')\right]\\
&= 2^{4m-2k_1-2k_2-4}\delta_{i,i'}\delta_{j,j'}\delta_{n_1,0}\delta_{n_2,0}\delta_{n_1',0}\delta_{n_2',0},
\end{split}
\end{equation}
where $\delta$ is the Kronecker delta. We note that $\delta_{n_1,0}=\delta_{n_1,0}\delta_{k_1,0}  \;\forall k_1:n_1 \in \mathcal{N}_{k_1}$ and similarly for $\delta_{n_2,0}$. Then, substituting this into (\ref{eqn: grouped zero state gamma}) gives
\begin{equation}
\Gamma(G,k_1,k_2) = \sum_{\langle i,j \rangle} 2^2 \gamma_{i,j}^22^{4m-k_1-k_2-4}\delta_{k_1,0}\delta_{k_2,0},
\end{equation}
which substituting into (\ref{eqn: grouped zero state int shots}) gives
\begin{equation}
\epsilon \tilde S_\text{int} = \frac{1}{4}\sum_{F \in \partition{G}}\sqrt{\sum_{\langle i,j \rangle \in F}\gamma_{i,j}^2d^4}, 
\end{equation}
from which the result follows.
\end{proof}

\begin{theorem}[Expected shots for uncoupled state when not grouping measurements.]\label{thrm: scaling shots zero ungrouped}
Given a Hamiltonian
\begin{equation}
\H = \sum_{i=1}^N (\x_i^2 + \p_i^2) + \sum_{j>i} \gamma_{i,j} \x_i\x_j,
\end{equation}
for $N\geq2$ $d$-level QDOs in the binary encoding and measuring each corresponding Pauli operator separately, the expected number of shots $S$ required to achieve a constant measurement error $\epsilon$ for state $\ket{0}^{\otimes mN}$ is upper bounded by
\begin{equation}
S=\frac{1}{\epsilon^2}\frac{1}{16}\left(\sum_{\langle i,j \rangle \in G} |\gamma_{i,j}| \right)^2 d^4,
\end{equation}
where $G$ is the interaction graph. Assuming $G$ is fully connected and that  values of $\gamma_{i,j}$ are independent of both $N$ and $d$ and randomly distributed between different $F\in \partition(G)$, then $S \in O\left(\frac{\gamma^2 N^4 d^4}{\epsilon^2}\right)$, where $\gamma$ is a typical scale for $\gamma_{i,j}$.\end{theorem}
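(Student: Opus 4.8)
The plan is to follow the proof of Theorem~\ref{thrm:scaling shots zero} almost verbatim, since the pure-state expectation values and covariances are identical; the only change is combinatorial, reflecting that without grouping each interaction term is assigned to its own circuit rather than being pooled with the other (disjoint) edges of a $1$-factorisation. I would start from the pure-state shot-count expression $S=\frac{1}{\epsilon^2}\left[\sum_{g}\sqrt{\sum_{i,j\in g}a_ia_j\cov[\P_i,\P_j]}\right]^2$ and split it, as before, into a non-interacting and an interacting contribution via $\epsilon^2 S=(\epsilon\tilde S_{\text{non-int}}+\epsilon\tilde S_{\text{int}})^2$, now with the collections $g$ taken to be singletons (one per measured operator).

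The non-interacting part vanishes exactly as in the grouped case: for $\ket{0}^{\otimes mN}$ one has $\langle Z^{(i)}\rangle=\langle Z^{(i)}Z^{(j)}\rangle=1$, so every $\cov[Z^{(i)},Z^{(j)}]=0$ and $\tilde S_{\text{non-int}}=0$, reproducing \eqref{eqn: grouped zero state nonint shots}. Thus only the coupling terms contribute, and I can reuse the per-edge analysis of Theorem~\ref{thrm:scaling shots zero} edge by edge.

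For the interacting part the key observation is that $\ket{0}^{\otimes mN}$ is a product state over the QDO registers, and every Pauli appearing in an $\x_i\x_j$ block carries an $X$ or $Y$ on the least-significant qubit of each register (the least-significant bit flips in every Fock transition $n\to n+1$), so its expectation is zero. Consequently the covariance between any two Paulis supported on different edges factorises and vanishes, which means that grouping disjoint edges together (as Baranyai's partitioning does in Theorem~\ref{thrm:scaling shots zero}) yields no variance reduction for this state. The sole effect of not grouping is therefore to separate the edges of each $1$-factorisation into distinct circuits: the per-edge Fock-transition covariance collapse inherited from Theorem~\ref{thrm:scaling shots zero} still produces the factor $d^4$ per edge, but the edges are now summed outside the square root. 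This replaces the factorisation sum $\sum_{F\in\partition(G)}\sqrt{\sum_{\langle i,j\rangle\in F}\gamma_{i,j}^2}$ by $\sum_{\langle i,j\rangle\in G}\sqrt{\gamma_{i,j}^2}=\sum_{\langle i,j\rangle\in G}|\gamma_{i,j}|$, giving $\epsilon\tilde S_{\text{int}}=\tfrac14 d^2\sum_{\langle i,j\rangle\in G}|\gamma_{i,j}|$ and hence $S=\frac{1}{\epsilon^2}\frac{1}{16}\left(\sum_{\langle i,j\rangle\in G}|\gamma_{i,j}|\right)^2 d^4$. For a fully connected $G$ with couplings of typical magnitude $\gamma$, $\sum_{\langle i,j\rangle}|\gamma_{i,j}|=O(N^2\gamma)$, so $S=O(\gamma^2 N^4 d^4/\epsilon^2)$.

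The step I expect to be most delicate is justifying that separating the commuting, disjoint edges into different circuits genuinely converts the $\ell_2$ combination $\sqrt{\sum_{\langle i,j\rangle\in F}\gamma_{i,j}^2}$ of the grouped bound into the $\ell_1$ combination $\sum_{\langle i,j\rangle}|\gamma_{i,j}|$ — i.e.\ that no cross-edge covariance is lost or gained in the regrouping — and in confirming that the surviving Fock-transition contribution per edge is unaffected by it. Both points reduce to the covariance identity already established in the proof of Theorem~\ref{thrm:scaling shots zero}, so beyond this bookkeeping the argument requires no new estimates.
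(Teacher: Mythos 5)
Your proposal is correct and follows essentially the same route as the paper: both split off the non-interacting part (which vanishes for $\ket{0}^{\otimes mN}$), reuse the covariance identity \eqref{eqn: covariance zero state subterms} to collapse the interaction sum to the $n_1=n_2=0$, $k_1=k_2=0$ terms, and observe that ungrouping merely replaces the $\ell_2$ combination over edges within a 1-factorisation by the $\ell_1$ sum $\sum_{\langle i,j\rangle\in G}\lvert\gamma_{i,j}\rvert$, yielding $\epsilon\tilde S_{\text{int}}=\tfrac14 d^2\sum_{\langle i,j\rangle\in G}\lvert\gamma_{i,j}\rvert$. The paper writes the ungrouped sum explicitly over individual Pauli operators rather than per edge, but since only a fixed number of Paulis per edge survive the delta-function collapse, this is the same computation.
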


\begin{proof}Again we consider non-interacting and interacting contributions separately. As in theorem \ref{thrm:scaling shots zero}, the uncoupled contribution is zero.

The contribution from the coupling terms is
\begin{equation}
\begin{split}
\epsilon \tilde S_{\text{int}} =& \frac{1}{2}\sum_{k_1,k_2=0}^{m-1}\sum_{\langle i,j \rangle \in G} \sum_{\substack{\P_1^{(:k_1)}\in\set{X,Y}^{\otimes(k_1+1)}\\ \P_2^{(:k_2)}\in\set{X,Y}^{\otimes(k_2+1)}}}\\
&\sum_{\substack{n_1\in\mathcal{N}_{k_1}\\ n_2\in\mathcal{N}_{k_2}}} \sqrt{2^2\gamma_{i,j}^2\var[\hat{A}_{\langle i,j \rangle}(k_1,k_2,n_1,n_2)}]
\end{split}
\end{equation}
where
\begin{align}
&\hat{A}_{\langle i,j \rangle}(k_1,k_2,n_1,n_2) \nonumber\\
&= \left(
\P_1^{(:k_1)}
\otimes \bigotimes_{\alpha=k_1+1}^{m-1}(\I+(-1)^{\bin(n_1)_\alpha}Z)
\right)_{\text{QDO } i}\nonumber\\
&\hphantom{=}\otimes\left( 
\P_2^{(:k_2)}
\otimes \bigotimes_{\beta=k_2+1}^{m-1}
(\I+(-1)^{\bin(n_2)_\beta}Z)
\right)_{\text{QDO } j}
\end{align}
acts on the sub registers corresponding to QDOs $i$ and $j$. Using \eqref{eqn: covariance zero state subterms}, this simplifies to
\begin{equation}
\epsilon \tilde S_{\text{int}} = \frac{1}{4}\left(\sum_{\langle i,j \rangle \in G} |\gamma_{i,j}|\right)d^2,
\end{equation}
from which the result follows. \end{proof}

%APPENDIX MITIGATION

\section{Noise mitigation for anharmonic experiments}\label{app: mitigation}

Unlike for the harmonic systems, for anharmonic systems it is not possible perform the subtraction method given in \eqref{eqn: error subtraction}. This is because we do not have a good way to estimate $\lambda$ as the ground state of the zero field system is not simply given by the zero Fock state. Therefore, we use more costly error mitigation techniques in order to reduce the effect of device noise in the final energy measurements. We choose to use a combination of two error mitigation schemes, namely zero-noise extrapolation~\cite{temme2017error, li2017efficient} and an algorithmic error mitigation scheme based on Lanczos extrapolation~\cite{suchsland2021algorithmic}.

To reduce the effect of noise on the measurement of $\langle \H \rangle$ using zero noise extrapolation, the expectation value is measured multiple times with the noise in the system artificially increased to various levels. The amount that the noise is amplified in the system is parameterised by a variable which we call $\Lambda$ such that we have values $\{\langle H \rangle_\Lambda\}$ evaluated at each chosen value of $\Lambda$. A curve is then fitted to the values of $\Lambda$ and $\langle \H \rangle_\Lambda$ which can then be extrapolated back to $\Lambda=0$ to hopefully remove the effect of noise on the system. Zero noise extrapolation requires no more qubits than the original system and can be performed fairly easily. The difficulty lies in finding a good method to amplify the noise to enough values of $\Lambda$ such that a good fit can be found, as well as choosing a good fitting function for your device noise model.

In our experiments we boost the device noise by replacing each two qubit CNOT gate by an odd number of CNOT gates. Since in the noiseless setting an odd number of CNOT gates has the same effect as a single CNOT gate, this has the effect of artificially boosting the two-qubit gate noise - which is expected to be the main source of noise in the superconducting system. We use three different values of $\Lambda$ which correspond to using one, three and five CNOTs in place of each CNOT in the circuit. A linear fit was found to match values of $\{\langle H \rangle_\Lambda\}$ well and used to extrapolate to the $\Lambda=0$ to achieve a zero noise estimate.

For the further mitigation based on the Lanczos algorithm, we follow the method described in Ref.~\cite{suchsland2021algorithmic} to perform an $m=2$ Krylov subspace expansion of the Hamiltonian. In the $m=2$ case, this requires measuring $\langle H \rangle$, $\langle\H^2\rangle$ and $\langle \H^3 \rangle$ on the noisy device to create the Krylov basis. Minimisation in this basis (performed by exact matrix diagonalisation) is then used to find a minimum value for $\langle \H \rangle$ which satisfies the variational principle.

The zero-noise extrapolation and Lanczos algorithm were combined as in Ref.~\cite{suchsland2021algorithmic} to give an increased level of noise reduction. First zero noise extrapolation was used on expectation values of $\langle \H \rangle$, $\langle \H^2 \rangle$ and $\langle \H^3 \rangle$ which were then used in the Lanczos mitigation method. As with measurements of the cost function $\langle H \rangle$ in the main text, the Pauli strings for each operator were grouped into collections that qubit-wise commuted and each group measured with approximately 16 million shots. Operators $\H$, $\H^2$ and $\H^3$ consist of 6, 7 and 8 Pauli strings which were grouped into 2, 3 and 3 commuting groups respectively.

%APPENDIX HARDWARE

\section{IBM Quantum Hardware}\label{app: hardware}
For practical reasons related to device availability and queue-times we used three different quantum processors to evaluate our algorithm. All three processors were provided by IBM via the Qiskit~\cite{Qiskit} open source framework. We used \textit{ibmq\_montreal} (27 qubits, Processor type Falcon r4), \textit{ibm\_lagos} (7 qubits, Processor type Falcon r5.11H) and \textit{ibmq\_santiago} (5 qubits, Processor type Falcon r4L). 
The reported quantum volume of the aforementioned devices is 128, 32 and 32 respectively.

The experiments in Sec.~\ref{sec: ibmq dispersion} use 4 qubits on every processor, namely qubits 13, 14, 16 and 19 in \textit{ibmq\_montreal}, qubits 1, 3, 4 and 5 in \textit{ibm\_lagos} and qubits 1, 2, 3 and 4 in \textit{ibmq\_santiago} following the coupling map show in Fig.~\ref{fig:coupling}. The experiments were performed in a timeframe of 2 months and the noise parameters (and qubit properties) were varied along the period of the experiments. The experiments in Sec.~\ref{subsec: anharmonic experiments} used qubits 14 and 16 on \textit{ibmq\_montreal} and were performed over a period of two days.

\begin{figure}[!htb]
\centering
\includegraphics{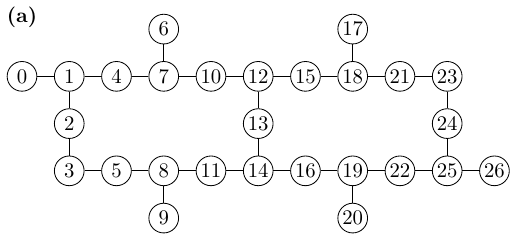}
\includegraphics{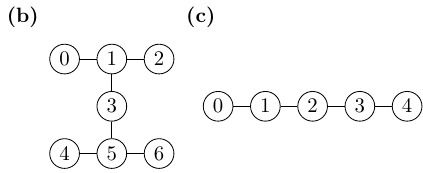}
    \caption{Coupling map of the IBM Quantum devices used to produce the results of the paper. The devices used are \textbf{(a)} \textit{ibmq\_montreal}, \textbf{(b)} \textit{ibm\_lagos} and \textbf{(c)} \textit{ibmq\_santiago} with 27, 7 and 5 qubits respectively.}
    \label{fig:coupling}
\end{figure}

%APPENDIX SYMPLECTIC

\section{Symplectic eigenvalues of harmonic oscillators with linear coupling}\label{app: symplectic}
In the case of purely harmonic oscillators with linear coupling given by the Hamiltonian in Eqn.~\eqref{eqn: hamiltonian}, the exact ground state energy (without any truncation of the Fock space) can be found by calculating the eigenspectrum of a $2N\times2N$ matrix.

The Hamiltonian in~\eqref{eqn: hamiltonian} can be rewritten as 
\begin{equation}\label{eqn: quadratic hamiltonian}
    \H = \mathbf{\q}^\text{T} \mathbf{H} \mathbf{\q},
\end{equation}
where we have defined the vector operator $\mathbf{\q}^\text{T} = (\q_1,\q_2,\dots,\q_{2N-1},\q_{2N}) := (\x_1,\p_1,\dots,\x_N,\p_N)$ and correlation matrix $\mathbf{H}\in\mathbb{R}^{2N\times 2N}$ which has elements $H_{i,j}$ equal to the coefficients of the $\q_i\q_j$ operator in $\H$. In this representation, the bosonic commutation relation can be written as elements of a matrix $[\q_i,\q_j]= \Omega_{i,j}$ where
\begin{equation}
    \mathbf{\Omega} = \bigoplus_{k=1}^N \boldsymbol{\omega} \in \mathbb{R}^{2N\times2N}, \quad \boldsymbol{\omega} =
    \begin{pmatrix}
    0 & 1 \\
    -1 & 0
    \end{pmatrix}.
\end{equation}

The eigenspectrum of the matrix $|i\mathbf{\Omega}\mathbf{H}|$ (where the absolute value is defined in the operatorial sense~\cite{pirandola2009correlation}) is $\{\epsilon_1,\epsilon_2,\dots,\epsilon_{2N}\}$, where $\{\epsilon_i\}$ are known as the symplectic eigenvalues. The groundstate energy of the system is given by $E_0=\frac{1}{2}\sum_{i=1}^{2N} \epsilon_i$. We consider systems of increasing coupling strength up to the point at which $\min_i\epsilon_i = 0$ and the system becomes critical. For further details see~\cite{pirandola2009correlation, schuch2006quantum}.

We note that this method for calculating the groundstate energy only works for Hamiltonians that can be written in the quadratic form of~\eqref{eqn: quadratic hamiltonian} for which ground states and thermal states are Gaussian.

%BIBLIOGRAPHY

\bibliography{main.bib} 

\end{document}